\documentclass[11pt]{article}

\usepackage[a4paper, margin=1in]{geometry}
\usepackage[english]{babel}
\usepackage[utf8]{inputenc}

\usepackage{lmodern}
\usepackage{graphicx}
\usepackage[colorinlistoftodos]{todonotes}

\usepackage[colorlinks=true]{hyperref}
\usepackage{hyperref}

\usepackage{amsthm}
\usepackage{amsmath}
\usepackage{amssymb}
\usepackage{amsfonts}
\usepackage{mathrsfs}
\usepackage{mathtools}
\usepackage{verbatim}
\usepackage{footnote}
\usepackage{lineno}
\usepackage{caption}
\usepackage{tikzsymbols}
\usepackage[capitalize, nameinlink]{cleveref}

\usepackage[symbol]{footmisc}

\usepackage{icomma}                 
\usepackage{enumitem}              
\usepackage{array}                  
\usepackage{multirow}
\usepackage{setspace}

\usepackage{algorithm}              
\usepackage{algorithmicx}           
\usepackage[noend]{algpseudocode}   
\usepackage{listings}

\floatname{algorithm}{Algorithm}

\usepackage{physics}
\usepackage{euscript}               
\usepackage{xspace}

\usepackage{tikz}
\usepackage{boxedminipage2e}
\usepackage[section]{placeins}

\usepackage{color}

\newtheorem{theorem}{Theorem}
\newtheorem{lemma}{Lemma}
\newtheorem{corollary}{Corollary}
\newtheorem{definition}{Definition}

\newtheorem*{theor}{Theorem}

% Letters

\newcommand{\N}{\mathbb{N}}

\newcommand{\R}{\mathbb{R}}

\newcommand{\Cc}{\mathcal{C}}

\newcommand{\wF}{\tilde{F}}
\newcommand{\wC}{\tilde{C}}

\newcommand{\hx}{\hat{x}}
\newcommand{\hy}{\hat{y}}

\newcommand{\dx}{\dot{x}}
\newcommand{\dy}{\dot{y}}

\newcommand{\eps}{\varepsilon}

% Operators

\newcommand{\argmin}{\mathop{\mathrm{argmin}}}

% Other
\newcommand{\mlklp}{\operatorname{MLkSC-LP}}
\newcommand{\mslp}{\operatorname{MSSC-LP}}

\newcommand{\MLkSC}{\mathrm{MLkSC}\xspace}
\newcommand{\MSSC}{\mathrm{MSSC}\xspace}

\newcommand{\optmlk}{\mathrm{OPT}_\mathrm{MLk}}
\newcommand{\optms}{\mathrm{OPT}_\mathrm{MS}}
\newcommand{\SC}{\mathrm{SC}\xspace}
\newcommand{\sclp}{\operatorname{SC-LP}}

\newcounter{alphasect}
\def\alphainsection{0}

\let\oldsection=\section
\def\section{%
	\ifnum\alphainsection=1%
	\addtocounter{alphasect}{1}
	\fi%
	\oldsection}%

\renewcommand\thesection{%
	\ifnum\alphainsection=1% 
	\Alph{alphasect}%
	\else%
	\arabic{section}%
	\fi%
}%

\date{}
\title{\Large \bf Approximating Star Cover Problems}
\author{
	{\rm Buddhima Gamlath}\\
	buddhima.gamlath@epfl.ch \\
	EPFL, Lausanne, Switzerland
	\and
	{\rm Vadim Grinberg}\\
	vgm@ttic.edu\\
	TTIC\footnote{Part of the work was done while the author was a Summer@EPFL intern in the School of Computer and Communication Sciences, Ecole polytechnique federale de Lausanne, Lausanne, and a full-time undergraduate student in the Faculty of Computer Science, Higher School of Economics, Moscow.}, Chicago, USA
}

\begin{document}

\maketitle              % typeset the header of the contribution
\begin{abstract}

Given a metric space $(F \cup C, d)$, we consider star covers of $C$ with balanced loads. A star is a pair $(f, C_f)$ where $f \in F$ and $C_f \subseteq C$, and the load of a star is $\sum_{c \in C_f} d(f, c)$. In minimum load $k$-star cover problem $(\MLkSC)$, one tries to cover the set of clients $C$ using $k$ stars that minimize the maximum load of a star, and in minimum size star cover $(\MSSC)$ one aims to find the minimum number of stars of load at most $T$ needed to cover $C$, where $T$ is a given parameter.

We obtain new bicriteria approximations for the two problems using novel rounding algorithms for their standard LP relaxations. For $\MLkSC$, we find a star cover with $(1+\eps)k$ stars and $O(1/\eps^2)\optmlk$ load where $\optmlk$ is the optimum load. For $\MSSC$, we find a star cover with $O(1/\eps^2) \optms$ stars of load at most $(2 + \eps) T$ where $\optms$ is the optimal number of stars for the problem.
Previously, non-trivial bicriteria approximations were known only when $F = C$.

\textbf{Keywords}: Star Cover, Approximation Algorithms, LP Rounding.
\end{abstract}

\section{Introduction}

Facility location (FL) is a family of problems in computer science where the general goal is to assign a set of clients to a set of facilities under various constraints and optimization criteria. 
FL family encompasses many natural clustering problems like $k$-median and $k$-means, most of which are well studied. 
In this work, we study two relatively less studied FL problems which we call minimum load $k$-star cover ($\MLkSC$) and minimum size star cover $(\MSSC$). The goal of $\MLkSC$ is to assign clients to at most $k$ facilities, minimizing the maximum assignment cost of a facility, while that of $\MSSC$ is to find a client-facility assignment with the minimum number of facilities such that the total assignment cost of each facility is upper bounded by a given threshold $T$.

We begin by formally defining the two problems. Let $C$ be a finite set of clients and $F$ be a finite set of facilities. Let $(F \cup C, d)$ be a finite metric space where $d : (F \cup C) \times (F \cup C) \to \R_0^+$ is a distance metric. By a \emph{star} in $(F, C)$, we mean any tuple $(f, C_f)$, where $f \in F$ and $C_f \subseteq C$.
We say two stars $(f, C_f)$ and $(g, C_g)$ are \emph{disjoint} if $f \neq g$ and $C_f \cap C_g = \varnothing$. A \emph{star cover} of $(F, C)$ is a finite collection $S = \{(f_1, C_{f_1}), \allowbreak \dots, (f_{|S|}, C_{i_{|S|}})\}$ of disjoint stars such that $C = C_{f_1} \cup \dots \cup C_{f_{|S|}}$. The size of a star cover $S$ is the number of stars $|S|$ in the cover. Given a star cover $S$, a star $(f, C_f) \in S$, and a client $c \in C_f$, we say that client $c$ is \emph{assigned} to facility $f$ under $S$ and the facility $f$ is \emph{serving} client $c$ under $S$.
For a star $(f, C_f)$, the \emph{load} of facility $f$ is the sum of pair-wise distances $\sum_{c \in C_f}d(f, c)$ between itself and its clients.
The \emph{load} $L(S)$ of a star cover $S$ is the load of its maximum load star. I.e., $L(S) := \max_{(f, C_f) \in S}\sum_{c \in C_f}d(f, c)$.
For notational convenience, we denote the collection of all star covers of $(F, C)$ by $\mathcal{S}$.
Using the introduced notation, we now define $\MLkSC$ and $\MSSC$.

\begin{definition}[Minimum Load $k$-Star Cover]
    Given a finite metric space $(F \cup C, d)$ and number $k \in \N$, the task of minimum load $k$-star cover problem is to find a star cover of size at most $k$ that minimizes the load; I.e., find
    $S^\ast := \argmin_{S \in \mathcal{S} : |S| \leq k} L(S).$
    We denote the optimal load $L(S^\ast)$ by $\optmlk$.
\end{definition}
\begin{definition}[Minimum Size Star Cover]
    Given a finite metric space $(F \cup C, d)$ and a number $T \in \R_+$, the task of minimum size star cover problem is to find a star cover of load at most $T$ that minimizes the size; I.e., find a star cover $S^\star := \argmin_{S \in \mathcal{S} : L(S) \leq T} |S|$.
    We denote the optimal size $|S^\star|$ by $\optms$.
\end{definition}

Even et al.~\cite{EGK03} showed that both $\MLkSC$ and $\MSSC$ are NP-hard for general metrics even when $F = C$. Both Even et al.~\cite{EGK03} and Arkin et al.~\cite{AHL06} studied the problem in $F = C$ setting and gave constant factor bicriteria approximation algorithms for $\MLkSC$. The latter work also gave a constant factor approximation algorithm for $\MSSC$ in the same setting.

Arkin et al.~\cite{AHL06} use k-median clustering and then split the individual clusters that are too large into several smaller clusters to obtain their approximation guarantees. 
However, the splitting of clusters rely on that the clients and facilities are indistinguishable, which allows one to conveniently choose a new facility for each new partition created in the splitting process. 
Meanwhile, the technique of Even et al.~\cite{EGK03} is to formulate the problem as an integer program, round its LP relaxation using minimum make-span rounding techniques, and use a clustering approach that also relies on $F$ being equal to $C$ to obtain the final bicriteria approximation guarantees. 
Both the techniques do not generalize to the case where $F \neq C$ unless it is allowed to open the same facility multiple times. 

Recently, Ahmadian et al.~\cite{ABB18} showed that $\MLkSC$ is NP-hard even if we restrict the metric space to be a line metric.
They further gave a PTAS for $\MLkSC$ in line metrics and a quasi-PTAS for the same in tree metrics. However, their techniques are specific to line and tree metrics, and it is not known whether they can be extended to general metrics.

The main goal of this work is to extend the approach of Even et al.~\cite{EGK03} to $F \neq C$ setting where any given facility can be opened at most once. 
To do so, we introduce a novel clustering technique and an accompanied new algorithm to modify the LP solution before applying the minimum makespan rounding at the end.  This yields the following theorem:
\begin{theorem}\label{t1}
    There exists a polynomial time algorithm that, given an instance $(F \cup C, d)$ of $\MLkSC$ problem and any $\eps \in (0,1)$, finds a star cover of $(F, C)$ of size at most $(1 + \eps)k$ and load at most $O(\optmlk /\eps^2)$.
\end{theorem}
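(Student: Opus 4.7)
The plan is to prove \Cref{t1} by rounding a natural LP relaxation of \MLkSC via a three-stage pipeline: LP setup with binary search on the load budget $T$, a clustering step that reduces the fractionally-open facilities to at most $(1+\eps)k$ representatives (this is the novel step tailored to the $F \neq C$ setting), and a final make-span-style rounding on the reduced instance. The LP has variables $y_f \in [0,1]$ and $x_{fc} \in [0,1]$, with constraints $\sum_f x_{fc} = 1$, $x_{fc} \leq y_f$, $\sum_f y_f \leq k$, and $\sum_c d(f,c) x_{fc} \leq T \cdot y_f$; I would also hardcode $x_{fc} = 0$ whenever $d(f,c) > T$, which remains valid at the true optimum. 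Binary-searching on $T$ down to the smallest feasible value $T^\star \leq \optmlk$ yields a fractional solution in which every used edge has length at most $T^\star$.

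The crux of the argument is the clustering. I would process facilities in order of increasing fractional radius $r_f := (\sum_c x_{fc} d(f,c))/y_f \leq T^\star$, and whenever an unclustered facility $f$ is next considered, open a new cluster centered at $f$ that sequentially absorbs every still-unclustered facility $f'$ reachable from $f$ through a client fractionally served by both (so that $d(f,f') \leq 2T^\star$ by the filtering bound and the triangle inequality), until the cluster accumulates $y$-mass at least $1$. The fractional $x$-mass of absorbed facilities is then consolidated onto $f$. Using the triangle inequality, the consolidated fractional load at each representative is bounded by $O(T^\star/\eps^2)$: one factor of $1/\eps$ comes from an $\eps$-scaled filtering/absorption radius needed to control the total rerouted client mass, and a second factor of $1/\eps$ comes from the per-cluster load blow-up due to concentrating roughly unit $y$-mass at a single representative. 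Since total $y$-mass is at most $k$ and each cluster carries $y$-mass at least $1-O(\eps)$, the number of representatives is at most $(1+O(\eps))k$.

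With the $(1+\eps)k$ representatives fixed as the only candidates for opening, I would write the resulting bipartite assignment LP between clients and representatives (every edge of length $O(T^\star/\eps)$ by construction) and round it integrally using the Lenstra--Shmoys--Tardos make-span technique, which adds at most one maximum-edge-length to each facility's load. This gives per-star load $O(T^\star/\eps^2) = O(\optmlk/\eps^2)$, proving the theorem. The main obstacle is the clustering: simultaneously guaranteeing the $(1+\eps)k$ representative bound, controlling the rerouting cost via triangle inequality, and verifying that the contracted fractional solution remains feasible (so that make-span rounding applies with the claimed guarantees) requires careful joint tuning of the processing order, absorption threshold, and $y$-mass stopping rule. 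This is precisely where the $F \neq C$ setting departs from prior $F = C$ techniques, which could treat any client as a valid facility site and thereby sidestep the cluster-contraction analysis entirely.
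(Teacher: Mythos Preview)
Your high-level pipeline (LP with binary search, cluster to get integral openings, finish with Lenstra--Shmoys--Tardos rounding) matches the paper, but the clustering step as you describe it has a real gap: it does not control the rerouting cost.

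When a closed facility $i$ is rerouted to an open representative $h$, the added load at $h$ is (by triangle inequality) at most $L(i,x) + d(h,i)\sum_{j} x_{ij}$. Your distance bound $d(h,i)\le 2T^\star$ handles the first factor, but \emph{nothing in your scheme bounds $\sum_j x_{ij}$}. A facility $i$ with small $y_i$ may be collocated with arbitrarily many clients (so $L(i,x)=0$ and the LP load constraint is vacuous), yet $\sum_j x_{ij}$ can be $\Theta(|C|\,y_i)$. Absorbing such an $i$ into a cluster at distance $\Theta(T^\star)$ then adds $\Theta(|C|\,y_i\,T^\star)$ to the representative's load, which is unbounded in terms of $T^\star/\eps^2$. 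Stopping absorption at $y$-mass $1$ does not help, since the client mass $\sum_{i\in\text{cluster}}\sum_j x_{ij}$ is not controlled by $\sum_{i\in\text{cluster}} y_i$. Your parenthetical ``$\eps$-scaled filtering/absorption radius needed to control the total rerouted client mass'' is exactly the missing idea, and it is not clear how any uniform radius shrinkage can recover a bound on $\sum_j x_{ij}$.

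The paper deals with this by two additional ingredients that your proposal omits. First, a \emph{preprocessing} step forces every nonzero $x_{ij}$ to satisfy $x_{ij}\ge \gamma y_i$, so that $|N(i)|\le L(i,x)/(\gamma y_i\cdot\min_j d(i,j))$ becomes usable. Second, a \emph{heavy-facility} step: facilities with $\sum_{j\in N(i)} D(j)>\lambda T$ (where $D(j)$ is $j$'s average connection cost) are opened integrally up front; there are at most $k/\lambda$ of them. For the remaining (non-heavy) facilities, $|N(i)|\cdot\min_{j\in N(i)} D(j)\le \lambda T$, and the paper's clustering---which is centered at \emph{clients} ordered by $D(j)$, not at facilities ordered by $r_f$---guarantees that any rerouting distance satisfies $d(h,i)\le O(\rho)\cdot\min_{j\in N(i)} D(j)$. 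These two bounds multiply to give $d(h,i)\,|N(i)|\le O(\rho\lambda)T$, which is the crux of the $O(1/\eps^2)$ load guarantee. Your facility-centered clustering with a flat $2T^\star$ radius cannot reproduce this, because the bound must be \emph{adaptive to $i$}: small enough to offset a potentially large $|N(i)|$.
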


As a complementary result, we also show that the standard LP relaxation has some inherent limitations. That is, we construct a family of $\MLkSC$ instances where the load of any integral $(1 + \eps)k$-star cover is at least $\Omega(1/\eps)$ times the optimal value of the standard LP.

With slight modifications to our clustering and rounding techniques, we further obtain the following theorem on $\MSSC$:
\begin{theorem}\label{t2}
     There exists a polynomial time algorithm that, given an instance of $\MSSC$ problem with load parameter $T$ and any $\eps \in (0, 1)$, finds a star cover of load at most $(2 + \eps)T$ and size at most $O(\optms/\eps^2)$.
\end{theorem}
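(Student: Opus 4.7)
The plan is to mirror the approach used for \Cref{t1} with adjustments that swap the roles of size and load. First, I would write down the natural LP relaxation $\mslp$: a variable $y_f \in [0,1]$ for each facility $f \in F$, a variable $x_{c,f} \in [0,1]$ for each client-facility pair, and the constraints $\sum_f x_{c,f} = 1$ for every $c$, $x_{c,f} \leq y_f$, and $\sum_{c} d(c,f) x_{c,f} \leq T \cdot y_f$, minimizing $\sum_f y_f$. The LP optimum lower-bounds $\optms$, so it suffices to round any fractional optimum into an integral cover with size blown up by $O(1/\eps^2)$ and load blown up by $(2+\eps)$.

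Next, I would apply the same clustering technique underlying \Cref{t1} to the fractional solution $(x,y)$, but reparameterized so that the budget being preserved is the total opening $\sum_f y_f$ rather than the count $k$, and the quantity that may inflate is the per-facility fractional load instead of the number of stars. The output should be a modified fractional solution $(\tilde x, \tilde y)$ satisfying: (a) $\sum_f \tilde y_f = O(\sum_f y_f / \eps^2)$, (b) each facility with $\tilde y_f > 0$ has fractional load $\sum_c d(c,f) \tilde x_{c,f} \leq (1 + O(\eps)) T$, and (c) the bipartite support of $\tilde x$ is thin enough to be handed to makespan-style rounding. Concretely, the MLkSC clustering groups clients and consolidates fractional assignments onto representatives so that the resulting star loads are bounded; here the grouping threshold is naturally expressed in units of $T$ rather than $\optmlk$, but the underlying combinatorial argument carries over.

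Finally, I would invoke the Lenstra--Shmoys--Tardos-style rounding as in the proof of \Cref{t1}. The rounding produces an integral star cover whose per-facility load exceeds the fractional load by at most the largest single assignment $d(c,f)$ in the support, which after clustering is bounded by $T$. Combined with (b) this yields per-star load at most $(1 + O(\eps))T + T = (2 + O(\eps)) T$, and the number of opened facilities is at most the support size of $\tilde y$, hence $O(\optms / \eps^2)$ by (a). Rescaling $\eps$ by a constant factor delivers the claimed $(2+\eps)T$ load and $O(\optms/\eps^2)$ size.

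The main obstacle will be verifying that the clustering step still works when the budget being tracked is the fractional opening mass rather than a hard count $k$. In particular, I expect the delicate part to be ensuring that the consolidation does not over-concentrate fractional mass on a single facility: any facility whose fractional load creeps above $(1 + O(\eps))T$ would, after makespan rounding, overshoot the $(2+\eps)T$ target. Controlling this requires the clustering to split heavy groups carefully and to argue that the total opening introduced by each split is charged against $\Omega(\eps)$ units of fractional opening in the original LP solution, producing the $1/\eps^2$ factor rather than a worse dependence.
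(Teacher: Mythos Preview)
Your high-level scaffolding matches the paper exactly: solve $\mslp$, transform the fractional solution into one with integral openings whose per-facility fractional load is at most $(1+O(\eps))T$, then apply Lenstra--Shmoys--Tardos rounding to pick up one more additive $T$. Your identification of the bottleneck --- keeping the fractional load at $(1+O(\eps))T$ rather than $O(T/\eps^2)$ before makespan rounding --- is also exactly right.

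The gap is in how you propose to achieve property~(b). Saying ``the underlying combinatorial argument carries over'' from the $\MLkSC$ clustering is not enough, because that pipeline as written (filter $\to$ preprocess $\to$ heavy with $\lambda=1/\eps$ $\to$ cluster $\to$ reroute) intrinsically blows up the load by $\Theta(1/\eps^2)$: filtering introduces a $\rho=\Theta(1/\eps)$ factor in distance bounds, and rerouting picks up another $\rho\lambda=\Theta(1/\eps^2)$ in the additive term $d(h,i)\,|N'(i)|$. No mere ``reparameterization'' of that sequence gets the load back down to $(1+O(\eps))T$. Your suggested mechanism of ``splitting heavy groups'' and charging each split to $\Omega(\eps)$ units of fractional opening does not correspond to anything in the actual argument and would not obviously control the load blowup.

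What the paper does instead is \emph{reorder} the steps and retune the thresholds. First preprocess (\cref{preproc1}) with $\gamma=1/(1+\eps)$, then open heavy facilities with the much more aggressive threshold $\lambda=\eps^2/15$ (this is where the $O(1/\eps^2)$ in the size comes from: at most $\nu k/\lambda=O(k/\eps^2)$ heavy facilities can be opened). Only \emph{after} heavy facilities are removed does one filter, now with $\rho=(1+\eps)^2/\eps^2$; the point is that every surviving facility $i$ satisfies $\sum_{j\in N'(i)}D(j)\le \lambda T$, so its filtered fractional load is at most $\rho\lambda T=(1+\eps)^2 T/15$, a small constant times $T$. This headroom is what lets the clustering and rerouting (which still multiply the load by a constant like $3$) finish with load at most $(1+O(\eps))T$, after which makespan rounding gives $(2+O(\eps))T$. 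The filtering step also scales openings by a factor $\rho\eps=O(1/\eps)$, but this is dominated by the $O(1/\eps^2)$ already spent on heavy facilities. Your plan is missing both the reordering and the $\lambda=\Theta(\eps^2)$ choice, and without them property~(b) will not hold.
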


As with $\MLkSC$, we show that the standard LP-relaxation for $\MSSC$ also suffers from inherent limitations; I.e., for any $\eps > 0$, we give an instance of $\MSSC$ for which there is a \emph{fractional} star cover of load at most $T$ but any integral star cover of that instance has load at least $(2 - \eps)T$ even with all facilities opened.

We end the introduction with a brief section on other related work. In \cref{sec:tech}, we introduce the LP relaxations of the two problems and provide a more elaborate description of our techniques. Later in \cref{sec:algo1} and \cref{sec:algo2} we describe the proofs of \cref{t1} and \cref{t2} in detail. We present the explicit constructions of families of $\MLkSC$ and $\MSSC$ that show inherent limitations of the respective standard LP relaxations in \cref{sec:appb}.

\subsubsection*{Other Related Work}

To the best of our knowledge, Even et al.~\cite{EGK03} and Arkin et al.~\cite{AHL06} were among the first to explicitly address close relatives of $\MLkSC$ and $\MSSC$ problems. 
Both of their works considered the problem where one has to cover nodes (or edges) of a graph using a collection of \emph{objects} (I.e.,trees or stars). 
Evans et al. considered the problem of minimizing the maximum cost of an object when the number of objects is fixed, for which they gave a $4$-approximation algorithm.  
Arkin et al. also studied the same problem and additionally considered paths and walks as covering objects. 
They further discussed the $\MSSC$ version of the problems where the goal is to minimize the number of covering objects such that the cost of each object is at most a given threshold. 
For min-max tree cover with $k$ trees, Khani and Salavatipour~\cite{KS11} later improved the approximation guarantee to a factor of three.

In general, many well-known facility location problems have constant factor approximation guarantees. 
For example, for uncapacitated facility location, the known best algorithm (Li et al.~\cite{Li13}) gives an approximation ratio of 1.488. For $k$-median in general metric spaces, the current best is $2.675$ due to Byrka et al.~\cite{BTS17}, and for $k$-means in general metric spaces, it is $(9 + \varepsilon)$  due to Ahmadian et al.~\cite{ANSW18}. Remarkably, all these results follows from LP based approaches.
A common theme of all these problems is that their objectives are to minimize a summation of costs. I.e., we minimize the sum of distances from clients to their respective closest opened facilities, where in uncapacitated facility location problem, we additionally have the sum of opening costs of the opened facilities. This \emph{min-sum} style objective is in contrast with the min-max style objective of minimum star cover problem which makes it immune to algorithmic approaches that are applicable to other common facility location counterparts.
 
As discussed, minimum star cover problems are closely related to minimum makespan scheduling and the generalized assignment problem. 
Two most influential literature in this regard include  Lenstra et al.~\cite{LST90} and Shmoys et al.~\cite{ST93}.

\section{Our Results and Techniques}
\label{sec:tech}

We start with the LP relaxations of the standard integer program formulations for $\MLkSC$ and $\MSSC$. To make the presentation easier, we first define a polytope $\sclp(T, k)$ such that the \emph{integral} points of $\sclp(T, k)$ are feasible star covers of load at most $T$ and size at most $k$.

For $i \in F$, let variable $y_i \in \{0, 1\}$ denote whether $i$'th facility is \emph{opened} (I.e., $y_i=1$ if and only if there is a star $(i, C_i)$ in the target star cover), and for $(i, j) \in F \times C$, let variable $x_{ij} \in \{0, 1\}$ denote whether $j$'th client is \emph{assigned} to facility $i$ (I.e., $x_{ij} = 1$ if and only if $j \in C_i$ where $(i, C_i)$ is a star in the target star cover). Then the following set of constraints define $\sclp(T, k)$:
\[\begin{minipage}[c]{0.7\textwidth}
\begin{align}
&  &&  \sum_{j \in C}d(i, j) \cdot x_{ij} \leq T \cdot y_i  & \forall i \in F, \label{eq:scons1}\\
&                    &&  \sum_{i \in F}y_i \leq k, \label{eq:scons2}\\
&                    &&  \sum_{i \in F}x_{ij} = 1 & \forall j \in C, \label{eq:scons3}\\
&                    &&  x_{ij} \leq y_i & \forall i \in F, \forall j \in C, \label{eq:scons4}\\
&                    &&  y_i \in [0, 1]  & \forall i \in F, \label{eq:scons5}\\
&                    &&  x_{ij} \in [0, 1]  & \forall i \in F, \forall j \in C \label{eq:scons6}\\
&                    && x_{ij} = 0&\forall i \in F, \forall j \in C: d(i, j) > T. \label{eq:scons7}
\end{align}
\end{minipage}
\tag{$\sclp(T, k)$}\]

Here, Constraint~\eqref{eq:scons1} ensures that the load of an opened facility $i \in F$ is at most $T$, 
while Constraint~\eqref{eq:scons2} limits the maximum number of opened facilities to $k$.
Constraint~\eqref{eq:scons3} and Constraint~\eqref{eq:scons4} ensure that each client is fully assigned
and they are only assigned to opened facilities. Finally Constraint~\eqref{eq:scons5} and 
Constraint~\eqref{eq:scons6} ensures that the only integral values of $x_{ij}$'s and $y_i$'s are $0$ or $1$, while Constraint~\eqref{eq:scons7} essentially removes any $(i, j)$ pair from consideration if the distance between them is larger than $T$.

Note that we can now define the LP for $\MLkSC$ as
\[ \text{Minimize } T \text{ such that } \sclp(T, k) \text{ is feasible,} \tag{$\mlklp$} \]
where one can find the minimum such $T$ using the standard binary search technique.
Similarly, the LP for $\MSSC$ can be stated as
\[ \text{Minimize } k \text{ such that } \sclp(T, k) \text{ is feasible.} \tag{$\mslp$} \]
Recall that $k$ is part of the  $\MLkSC$ problem input and $T$ is a part of the $\MSSC$ problem input.

For an arbitrary (not necessarily feasible) solution $(x, y)$ to $\sclp(T, k)$, for $i \in F$ let $L(i, x)$ denote the \emph{fractional load} of facility $i$ with respect to the assignment $x$, I.e., $L(i, x) := \sum_{j \in C}d(i, j)x_{ij}$.
A solution $(x, y)$ to $\sclp(T, k)$ is called \emph{$(\alpha, \beta)$-approximate}, if for every $i \in F$, $L(i, x) \leq \alpha T y_i$, and $\sum_{i \in F}y_i \leq \beta k$.
The proofs of \cref{t1} and \cref{t2} immediately follow from the two theorems on rounding feasible solutions of $\sclp$ presented below:

\begin{theorem}\label{scload}
    There exists a polynomial time rounding algorithm that, given
    a feasible solution $(x^\ast, y^\ast)$ to $\sclp(T, k)$ and any $\eps \in (0, 1)$, outputs an integral $(O(1/\eps^2), 1 + \eps)$-approximate solution to $\sclp(T, k)$.
\end{theorem}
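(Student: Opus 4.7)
The plan is to decompose the rounding into three phases: (1) a greedy clustering of facilities, (2) a modification of the LP solution that consolidates opening mass and client assignments onto cluster representatives, and (3) a min-makespan–style integral rounding of the consolidated solution.

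In the first phase, I would pick cluster representatives by a greedy procedure that processes facilities in an order favorable to the analysis, for instance in nondecreasing order of per-opening load $L(i, x^\ast)/y_i^\ast$. Each unmarked facility $i$ is declared a representative, and every still-unmarked facility within some radius $R = \Theta(T/\eps)$ of $i$ is marked as belonging to $i$'s cluster. This ordering yields two useful properties: representatives are pairwise at distance larger than $R$, and within each cluster the representative has the smallest per-$y$ load, so consolidating onto it does not degrade the per-unit-$y$ load budget given by Constraint~\eqref{eq:scons1}.

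In the second phase, I would construct a modified LP solution $(\hat x, \hat y)$ supported on the cluster representatives. For each cluster with representative $r$, I set $\hat y_r$ to the sum of $y_i^\ast$ over facilities $i$ in the cluster, and for every client $j$ I add each $x^\ast_{ij}$ in that cluster to $\hat x_{rj}$. Because any facility in the cluster lies within $R$ of $r$, the triangle inequality shows that the fractional load on $r$ grows by at most an additive $R$ times the client mass pulled in, which is $O(T/\eps)$. This accounts for the first factor of $1/\eps$ in the final load. To respect the constraint $\hat y_r \in [0,1]$ and keep the total number of surviving representatives bounded by $(1+\eps)k$ after rounding each $\hat y_r$ up to $1$, the radius $R$ must be tuned so that clusters bundle enough $y$-mass to ensure $\hat y_r \geq 1/(1+\eps)$, and any overflow beyond $\hat y_r = 1$ is absorbed by a small auxiliary set of copy-centers placed at absorbed facilities; the $(1+\eps)$ slack in the facility budget is precisely what pays for these copies.

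In the third phase, I would apply the classical Lenstra–Shmoys–Tardos bipartite-matching rounding to the assignment part of the consolidated LP, restricted to the at most $(1+\eps)k$ chosen representatives. Each client is a job whose "size" on representative $r$ is $d(r,j)$, bounded by $O(T/\eps)$: Constraint~\eqref{eq:scons7} ensures that any positive fractional assignment started with distance at most $T$, and consolidation added at most $R = O(T/\eps)$. The standard rounding increases the load on each representative by at most one additional job size on top of its fractional load, both of which are $O(T/\eps)$, giving a final integral load of $O(T/\eps^2)$.

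The main obstacle will be calibrating the cluster radius and the consolidation step so that three competing quantities are simultaneously controlled: the total number of opened representatives (at most $(1+\eps)k$ after rounding up every $\hat y_r$ to $1$ and after spawning overflow copies), the diameter of each cluster ($O(T/\eps)$), and the aggregated fractional load on each representative ($O(T/\eps)$). In the $F = C$ setting of Even et al.~\cite{EGK03}, one can use clients as spare facility locations to split overloaded clusters essentially for free, which is exactly what breaks in the $F \neq C$ case; the technical novelty of \cref{scload} will lie in the combined clustering-and-consolidation rule that manages the $\hat y_r$ accounting without this extra freedom.
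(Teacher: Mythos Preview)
Your proposal has two genuine gaps, both in Phase~2. First, the claim that ``the fractional load on $r$ grows by at most an additive $R$ times the client mass pulled in, which is $O(T/\eps)$'' is unjustified: nothing in $\sclp(T,k)$ bounds $\sum_j x^\ast_{ij}$ for a single facility $i$. If $i$ has many (near-)collocated clients, Constraint~\eqref{eq:scons1} is satisfied with $L(i,x^\ast)\approx 0$ while $\sum_j x^\ast_{ij}$ is arbitrarily large, so rerouting $i$ to a representative at distance $R=\Theta(T/\eps)$ adds unbounded load. The paper confronts exactly this obstacle and resolves it with two ingredients you are missing: a preprocessing step (\cref{preproc1}) forcing every nonzero $x'_{ij}$ to be at least $\gamma y'_i$, and a separate treatment of \emph{heavy} facilities (those with $\sum_{j\in N'(i)} D(j)>\lambda T$, where $D(j)$ is client $j$'s average connection distance). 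Heavy facilities are opened integrally; for the remaining facilities one has $|N'(i)|\cdot\min_{j\in N'(i)}D(j)\le\lambda T$, and the clustering is designed so that the rerouting distance is $O(\rho\cdot\min_{j\in N'(i)}D(j))$ rather than a uniform $R$, which is what makes the product bounded.

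Second, your sentence ``the radius $R$ must be tuned so that clusters bundle enough $y$-mass to ensure $\hat y_r\ge 1/(1+\eps)$'' is a wish, not an argument: with facility-centered greedy clustering at any fixed radius, a representative can end up alone with arbitrarily small $y^\ast$. The paper avoids this by taking \emph{clients} as cluster centers (scanned in increasing $D(j)$), and assigning each facility to the cluster of a client it serves; since after filtering every client $s$ has $\sum_{i\in N'(s)} y'_i\ge 1/(1+\eps)$ and $N'(s)\subseteq F'(s)$, every cluster automatically carries enough opening mass. Your ordering by $L(i,x^\ast)/y^\ast_i$ does not help here, since this ratio is at most $T$ for every facility by Constraint~\eqref{eq:scons1} and carries no information about nearby $y$-mass.
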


% \begin{theorem}[\cref{t1} restated]
%     For all $\eps\in (0, 1)$, there exists a polynomial time algorithm that finds a solution to the given instance of $\MLkSC$ problem, which opens at most $(1 + \eps)k$ facilities and achieves maximum load at most $O(1/\eps^2)\optmlk$.
% \end{theorem}
\begin{proof}[Proof of \cref{t1}]
 Let $\eps \in (0, 1)$ be given.
 Using standard binary search approach, we can guess the value $T^*$, such that $\optmlk \leq T^* \leq 2\optmlk$, by solving $\mlklp$ multiple times for different values of $T^*$ and either finding a feasible fractional solution of load at most $T^*$, or determining that no such solution exists.
 Let $(x^*, y^*)$ be the corresponding fractional solution to $\mlklp$.
 Observe that $(x^*, y^*)$ is a feasible solution to $\sclp(T^*, k)$.
 By \cref{scload}, we can round $(x^*, y^*)$ to an integral solution $(\dx, \dy)$, which opens at most $(1 + \eps)k$ facilities and achieves maximum load at most $O(1/\eps^2)T^*$, and it will take polynomial time.
 Therefore, $(\dx, \dy)$ will be an integral solution to $\mlklp$ with opening at most $(1 + \eps)k$ and maximum load at most $O(1/\eps^2)\optmlk$.
\end{proof}
% Similarly, \cref{t2} is a corollary of the analogous result on bi-criteria approximation to $\SC(T, k)$.
\begin{theorem}\label{scsize}
    There exists a polynomial time rounding algorithm that, given
    a feasible solution $(x^\ast, y^\ast)$ to $\sclp(T, k)$ and any $\eps \in (0, 1)$, outputs an integral $(2 + \eps, O(1/\eps^2))$-approximate solution to $\sclp(T, k)$.
\end{theorem}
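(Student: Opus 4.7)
The plan is to mirror the strategy used to prove \cref{scload}: cluster the facilities around well-chosen centers, modify the fractional solution $(x^*, y^*)$ within each cluster, and then invoke a Shmoys--Tardos-style minimum makespan rounding. The parameters of the three steps are rebalanced so that the final bound on the load becomes nearly tight, $(2+\eps)T$, at the cost of allowing the facility count to grow to $O(k/\eps^2)$.

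First I would perform the standard bucketing of client--facility pairs by distance: separate long edges, with $d(i,j) > \eps T$, from short edges. Constraint~\eqref{eq:scons7} caps all admissible edge lengths at $T$, so any facility can host only $O(1/\eps)$ long-edge clients integrally. Long-edge contributions can therefore be handled by the integrality properties of Shmoys--Tardos rounding (which places at most one additional long ``job'' per machine), while short edges contribute only an additive slack of $\eps T$ per facility under the same rounding.

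Next I would construct a greedy clustering of the fractionally opened facilities, in the same spirit as the one used for \cref{scload} but calibrated differently. Every cluster has a representative center facility, all other facilities of the cluster lie within distance $T$ of the center, and the total fractional opening $\sum_{i \in \Cc} y^*_i$ of each cluster $\Cc$ is bounded below by some threshold $\tau = \Theta(\eps)$, yielding $O(k/\eps)$ clusters. I would then modify $(x^*, y^*)$ cluster by cluster: for each client $j$ whose fractional mass is primarily concentrated in some $\Cc$, reroute that mass onto facilities of $\Cc$. Because two facilities in the same cluster are at distance at most $2T$ and every original edge costs at most $T$, the triangle inequality guarantees that the new fractional load of any facility stays bounded by $2 T y_i$. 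Applying the makespan rounding to the modified LP then adds only an $\eps T$ additive slack from the short edges while opening at most $O(1/\eps)$ extra facilities per cluster, giving a total of $O(k/\eps^2)$ facilities and load $(2+\eps)T$ as required.

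The main obstacle is the consolidation step. In the proof of \cref{scload} one can afford to reroute fractional mass rather freely because the target load bound is already $O(1/\eps^2)T$; here, with the load budget squeezed to $(2+\eps)T$, every single reassignment must be charged tightly against the triangle inequality. This constrains both which clients can be attached to which clusters (they must already have enough of their $x^*$-mass inside the cluster to absorb the rerouting) and how much fractional opening may be transferred across cluster boundaries to cover the rerouted clients. Balancing the three sources of error---clustering granularity $\tau$, makespan rounding slack, and triangle-inequality inflation---so that the overall bound lands cleanly at $(2+\eps)T$ while keeping the facility count to $O(k/\eps^2)$ is the delicate bookkeeping at the heart of the analysis.
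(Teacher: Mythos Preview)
Your outline has a genuine gap at the core arithmetic. You claim that after rerouting inside a cluster the fractional load of each surviving facility is at most $2T y_i$, and that the subsequent makespan rounding then adds only an $\eps T$ slack (from short edges) plus possibly one long job. But Shmoys--Tardos rounding guarantees that the integral load exceeds the fractional load by at most the maximum job length, which here is $T$ (Constraint~\eqref{eq:scons7}), not $\eps T$; your long/short edge bucketing does not change this, because a single long edge of length up to $T$ can and will be the ``extra'' job landed on a machine. Starting from a fractional load of $2T$ you therefore end at $3T$, not $(2+\eps)T$. To hit $(2+\eps)T$ you must enter the makespan rounding with fractional loads already at $(1+O(\eps))T$, and nothing in your clustering/rerouting sketch achieves that: rerouting across a cluster of diameter $2T$ inherently inflates loads by an additive $\Theta(T)$, not $\Theta(\eps T)$.

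The paper's proof in \cref{sec:algo2} closes exactly this gap by a different sequence of steps. It first applies the preprocessing of \cref{preproc1} (so every nonzero $x'_{ij}$ satisfies $x'_{ij}\ge \gamma y'_i$), then opens all $\lambda$-heavy facilities with the very small threshold $\lambda=\eps^2/15$---this is where the $O(1/\eps^2)$ facility blow-up originates---and only afterwards filters and clusters the remaining non-heavy facilities. The point is that for non-heavy facilities one has $\sum_{j\in N'(i)}D(j)\le \eps^2 T/15$, which, combined with the clustering bound of \cref{dist}, makes the rerouting cost $d(h,i)\,|N'(i)|$ an $O(\eps)T$ term rather than an $O(T)$ term. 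That is what keeps the fractional load at $(1+O(\eps))T$ before the final makespan rounding. Your proposal lacks both the preprocessing and the heavy-facility step, and without them the rerouting charge cannot be made small enough.
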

The proof of \cref{t2} using \cref{scsize} is just the same as the proof of \cref{t1} using \cref{scload}, omitting the binary search part (as we optimize over $k$ instead of $T$).

Note that $\mlklp$ closely resembles the LP used in minimum make-span rounding by Lenstra et al.~\cite{LST90}. 
In fact, for the case where we do not have a restriction on number of opened facilities, we can 
assume $y_i = 1$ for all $i \in F$, and the LP reduces to the minimum make-span problem, 
yielding a $2$-approximation algorithm.
The main difficulty here is to figure out which facilities to open. 
Once we have an integral opening of facilities, we can still use minimum make-span rounding at a 
loss of only a factor two in the guarantee for minimum load.
Thus, our algorithm for $\MLkSC$ essentially transforms the initial solution for $\mlklp$ via a 
series of steps to a solution with integral openings, I.e., $y_i \in \{0, 1\}$ for all $i \in F$, 
and fractional assignments, without violating Constraint~\ref{eq:scons1} by too much.

When we fully open (I.e., set $y_i = 1$) some facilities in the solution, inevitably, we have to close down (set $y_i = 0$) some other partially opened ones, which requires redistributing their assigned clients to the opened ones. 
This process is called \emph{rerouting} and is a well-known technique in rounding facility-location-like problems.
However, instead of bounding the total load of all facilities, our problem requires bounding each $L(i, x)$ separately, and consequently, many facility-location rounding algorithms which use rerouting fail to produce a good solution.

Let $x^\circ$ be the solution we obtain from $x$ after rerouting facility $i$ to facility $h$.
Using  triangle inequality $d(h, j) \leq d(h, i) + d(i, j)$ for $j \in C$, we can bound $L(h, x^\circ)$, the new load of $h$:
\[L(h, x^\circ) \leq L(h, x) + L(i, x) + d(h, i)\sum_{j \in C}x_{ij}.\] 
If both $L(h, x)$ and $L(i, x)$ were initially $O(T)$, the new load of $h$ will also be $O(T)$ if and only if the sum $d(h, i)\sum_{j \in C}x_{ij} \leq d(h, i)|N(i)|$ is also at most $O(T)$ (here $N(i)$ is the set of all clients partially served by $i$). However, if $d(h, i)|N(i)|$ is large for all other facilities $h$, a good alternative to rerouting is to open $i$ integrally and assign every client in $N(i)$ to $i$. We call such facilities \emph{heavy} facilities. There is still an issue if the integral load $\sum_{j \in N(i)}d(i, j)$ is too large compared to $T$, but we show that we can prevent having too large integral loads in heavy facilities by preceding the rerouting step with additional filtering and preprocessing steps. 
The filtering step blows-up the load constraint by a $(1 + \eps)$ factor while ensuring that no client is fractionally assigned to far away facilities.
The preprocessing step uses techniques similar to those of minimum make-span rounding by Lenstra et al.~\cite{LST90} to ensure that any non-zero fractional assignment $x_{ij}$ to a facility $i$ is at least a constant factor times its opening $y_i$, while slightly relaxing other constraints.

Once we identify the heavy facilities, we cluster the remaining, non-heavy facilities, and choose which ones should be opened based on the clustering.
Then we redistribute the assignments of the remaining facilities to those that were opened.
Using the properties of the preprocessed solution and the clustering, and using the fact that none of the un-opened facilities are heavy, we show that the resulting fractional assignment satisfies the constraints up to an $O(1/\eps^2)$ factor violation of load constraints.
Hence, the algorithmic result of \Cref{scload} follows from the minimum make-span rounding of Lenstra et al.~\cite{LST90}, which gives us an integral assignment with maximum load increased at most by another factor of $2$

% For the corresponding result on the integrality gap, we construct a family of $\MLkSC$ instances where the optimum load of any  $(1 + \eps) k$ integral star cover is at least $\Omega(1/\varepsilon)$ higher than the optimal load $T$ for which $\mlklp$ is feasible.

The algorithm for $\MSSC$ problem, on a high level, resembles that for $\MLkSC$: 
We first alter the solution of $\mslp$ to have integral $y_i$'s and fractional $x_{ij}$'s, allowing
the total opening $\sum_{i \in F}y_i$ to be at most $O(1/\eps^2)$ factor larger than the value of $\mslp$, and then
use minimum make-span rounding of Lenstra et al.~\cite{LST90} to obtain the final solution.
However, since make-span rounding guarantees only a factor two violation in the load constraint,
we need to make sure that our modified solution with integral openings and fractional assignments
introduces only small error in load constraints. Namely, to ensure that the final solution satisfies 
$(2 + \eps)T$ maximum load, before applying the minimum make-span rounding, all the loads must be
at most $(1 + \eps/2)T$.
We ensure this by re-arranging the steps of the algorithm for $\MLkSC$ and carefully choosing the parameters.

\section{$(O(1/\eps^2), 1 + \eps)$-approximation to $\SC(T, k)$}
\label{sec:algo1}

In this section, we show how to convert a (feasible) fractional solution $(x, y)$ of $\sclp$ in to a $(O(1/\eps^2), 1 + \eps)$-approximate solution with integral $y$ values. This together with minimum make-span rounding scheme by Lenstra et al.~\cite{LST90} proves \cref{scload}.

\subsection{Preprocessing and filtering}

Suppose that for each $(i, j) \in F \times C$ we either have $x_{ij} = 0$ or $x_{ij} \geq \gamma y_i$ for constant $\gamma \in (0, 1)$.
Then, if $L(i, x) = \sum_{j \in C}d(i, j)x_{ij} \leq \nu Ty_i$ for some constant $\nu \geq 1$, we have $\sum_{j \in N(i)}d(i, j) \leq \frac{\nu}{\gamma}T$.
Therefore, if we open $i$ integrally and assign all $N(i)$ to $i$, the resulting load of $i$ will be $O(T)$.
Even though we cannot guarantee the property above for every solution $(x, y)$ to $\sclp(T, k)$, we can modify $(x, y)$ so that all non-zero assignments $x_{ij}$ satisfy $x_{ij} \geq \gamma y_i$ for some constant $\gamma \in (0, 1)$ at the expense of slightly relaxing other constraints of $\sclp$.
This is exactly the statement of the preprocessing theorem.
\begin{theorem}[Preprocessing]\label{preproc1}
    Let $(x, y)$ be  such that, for all $i \in F$, $L(i, x) \leq \mu T y_i$ for some constant $\mu \geq 1$ and all other constraints of $\sclp(T, k)$ on variables $x$ are satisfied.
    There exists a polynomial time algorithm that, given such solution $(x, y)$ and a constant $\gamma \in (0, 1)$, finds a solution $(x', y')$ such that
    \begin{enumerate}
        \item $y' = y$, and if $x_{ij} = 0$, then $x'_{ij} = 0$;
        \item for every $(i, j) \in F\times C$, $y'_i \geq x'_{ij}$, and if $x'_{ij} > 0$, then $x'_{ij} \geq \gamma y_i'$;
        \item for every $j \in C$, $1 \geq \sum_{i \in F}x'_{ij} \geq 1 - \gamma$;
        \item for every $i \in F$, $L(i, x') \leq (\mu + 2 - \gamma)Ty_i'$.
    \end{enumerate}
\end{theorem}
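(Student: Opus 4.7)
Setting $y' := y$ immediately handles the $y$-part of property~1 and reduces the $y'_i \geq x'_{ij}$ half of property~2 to keeping $x'_{ij} \leq y_i$ on every pair. Call a pair $(i,j)$ \emph{big} if $x_{ij} \geq \gamma y_i$ and \emph{small} if $0 < x_{ij} < \gamma y_i$; set $x'_{ij} := x_{ij}$ on every big pair and every zero pair. The entire task is to round each small entry to either $0$ or exactly $\gamma y_i$ while keeping the per-client sum at least $1-\gamma$ and the per-facility load at most $(\mu+2-\gamma)Ty_i$.

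The plan is a two-stage procedure. In stage one, I process each facility $i$ locally: let $S_i = \{j : 0 < x_{ij} < \gamma y_i\}$ with mass budget $\alpha_i = \sum_{j \in S_i} x_{ij}$, sort the clients in $S_i$ by increasing $d(i,j)$, saturate the closest $\lfloor \alpha_i/(\gamma y_i) \rfloor$ of them to $x'_{ij} = \gamma y_i$, and zero out the rest. Because the used mass $\lfloor \alpha_i/(\gamma y_i) \rfloor \cdot \gamma y_i$ is at most $\alpha_i$ and this mass now sits on clients no farther from $i$ than the ones whose mass was dropped, $L(i, x')$ does not exceed $L(i, x) \leq \mu T y_i$ after stage one. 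In stage two, I inspect each client $j$ and, whenever $\sum_i x'_{ij} < 1 - \gamma$, greedily repair its coverage by picking the closest facility $i^\star$ with $x_{i^\star j} > 0$ but $x'_{i^\star j} = 0$, bumping $x'_{i^\star j}$ up to $\gamma y_{i^\star}$, and iterating until the coverage target is met. Since $x_{i^\star j} > 0$ implies $d(i^\star, j) \leq T$ by Constraint~\eqref{eq:scons7}, each such bump adds at most $\gamma y_{i^\star} \cdot T$ to the load of $i^\star$.

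Properties~1 and~2 then hold by construction, and property~3 is enforced by the termination rule of stage two. The main obstacle is property~4: bounding, for each facility $i^\star$, the \emph{aggregate} load added by stage-two bumps targeting $i^\star$ by $(2 - \gamma) T y_{i^\star}$. The charging argument I would pursue pairs each bump $(i^\star, j)$ with a corresponding coverage-deficit quantity at $j$ that was introduced in stage one. Because stage one creates a coverage deficit of at most $\gamma$ per client in total (any more would already break property~3 independently of bumps) and each bump consumes $\gamma y_{i^\star}$ units of deficit while costing at most $\gamma y_{i^\star} T$ in load, a volume-accounting argument combined with the uniform bound $d(i^\star, j) \leq T$ should limit the total bump-load on $i^\star$ by roughly $(1-\gamma)Ty_{i^\star}$, with an extra unit of $Ty_{i^\star}$ absorbing the single potentially-overflowing bump that straddles the $1-\gamma$ threshold, summing to the claimed $(2-\gamma)Ty_{i^\star}$. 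If this direct per-facility charging falls short, the backup plan is to formulate stage two as an iterative LP-rounding on the remaining small variables with the coverage and load-budget constraints explicit, and use a Lenstra--Shmoys--Tardos-style vertex/rank argument (the very tool the paper will invoke later for make-span rounding) to round the remaining fractional coordinates while losing at most one extra unit of $Ty_{i^\star}$ in the load of any facility, which is also consistent with the $(2-\gamma)$ constant.
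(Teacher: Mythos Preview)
Your stage-one load claim is actually correct (a short exchange argument shows $\gamma y_i \sum_{k\le m} d_k \le \sum_k x_{ij_k} d_k$), but the proposal has two genuine gaps that the charging sketch does not close.

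First, stage one can \emph{over-cover} a client: if $j$ is saturated by several facilities simultaneously, each raises its entry from $x_{ij}<\gamma y_i$ to $\gamma y_i$, and the resulting $\sum_i x'_{ij}$ can exceed $1$ (e.g.\ three facilities each with $x_{ij}=1/3$, $y_i=1$, $\gamma=1/2$ give coverage $3/2$). Stage two only bumps upward, so the upper bound in property~3 is never enforced.

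Second, and more seriously, the stage-two ``volume accounting'' does not yield a \emph{per-facility} bound. Deficit is a per-client quantity; summing the deficit consumed by all bumps at a fixed facility $i^\star$ tells you nothing about $i^\star$ unless you can bound how many clients choose $i^\star$. But $i^\star$ may have originally served $N$ clients with arbitrarily small $x_{i^\star j}$ (so $\alpha_{i^\star}<\gamma y_{i^\star}$, $m=0$, all dropped), each of which is also dropped by every other facility serving it, and each of which has $i^\star$ as its closest eligible repair target at distance $T$. Stage two then bumps all $N$ of them at $i^\star$, adding $N\gamma y_{i^\star} T$ load --- unbounded in $N$. No ``single overflowing bump'' fix helps here.

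Your backup plan is the right instinct, and is in fact what the paper does --- but from the start, not as a repair layer. The paper defines an auxiliary polytope $P(\tilde F,\tilde C,E,d,L)$ carrying per-client demand constraints, per-facility load constraints, and box constraints $w_{ij}\le\min(y_i,d_j)$, and proves a structural lemma on its extreme points: either some box constraint is tight, or some facility has degree $\le 1$, or some facility has degree exactly $2$ with total incident mass $\ge \gamma y_i$. It then iterates, fixing tight coordinates and dropping low-degree facilities from $\tilde F$ (while keeping their edges). The constant $2-\gamma$ arises exactly in the degree-$2$ case: once such a facility's load constraint is dropped, its two remaining edges can each end up at value $y_i$, but since their current mass is already $\ge\gamma y_i$ and each distance is $\le T$, the extra load is at most $(2y_i-\gamma y_i)T=(2-\gamma)Ty_i$. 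This extreme-point degree bound is the mechanism that your greedy repair lacks.
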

That is to say, we can guarantee the property $\{x_{ij} > 0 \iff x_{ij} \geq \gamma y_i\}$ by loosing at most $\gamma$ portion of each client's demand and slightly increasing each facility's load.
Loosing a factor of $\gamma$ demand is affordable for our purposes, as one can meet the demand constraint by scaling each $x_{ij}$ by a factor of at most $1/(1 - \gamma)$. 
Since $\gamma$ is a constant, this would blow up the load constraint only by an additional constant factor.
The proof of \cref{preproc1} is rather technical and is given in \cref{sec:appa}.

We now present our rounding algorithm step by step.
Let $(x, y)$ be a feasible fractional solution to $\sclp(T, k)$ and let $\eps \in (0, 1)$. Let $(\dx, \dy)$ denote the final rounded solution with integral $\dy$ and fractional $\dx$. 
\begin{definition}
    For $j \in C$, let $D(j) := \sum_{i \in F}d(i, j)x_{ij}$, the average facility distance to client $j$.
\end{definition}
Let $\rho := \frac{1 + \eps}{\eps}$.
By applying the well-known filtering technique of Lin and Vitter \cite{LV92} to $(x, y)$, we construct a new solution $(\hx, \hy)$, such that $\sum_{i \in F}\hy_i \leq (1 + \eps)k$, $L(i, \hx) \leq (1 + \eps)T\hy_i$ for all $i \in F$, and for every $i, j$, $\hx_{ij} \leq \hy_i$ and if $\hx_{ij} > 0$, then $d(i, j) \leq \rho D(j)$.
Applying \cref{preproc1} to $(\hx, \hy)$, we obtain solution $(x', y')$ such that
\begin{enumerate}
        \item $\sum_{i \in F}y_i' \leq (1 + \eps)k$,
        \item for all $(i, j)$, $y'_i \geq x'_{ij}$, and if $x'_{ij} > 0$, then $x'_{ij} \geq \gamma y_i'$ and $d(i, j) \leq \rho D(j)$,
        \item for every $j \in C$, $1 \geq \sum_{i \in F}x'_{ij} \geq 1 - \gamma$, and
        \item for every $i \in F$, $L(i, x') \leq (\mu + 2 - \gamma)Ty_i' = \nu Ty_i'$.
\end{enumerate}
Here $\nu := (\mu + 2 - \gamma)$ is a new load bound.
We choose $\mu := (1 + \eps)$ and $\gamma := \eps/(1 + \eps)$, but will keep the parameters unsubstituted, for convenience.
It is easy to see from the bounds above that for every $j \in C$, $\sum_{i \in F:x'_{ij} > 0}y_i' \geq 1/(1 + \eps)$.

\subsection{Opening heavy facilities}

We now give an algorithm to choose heavy facilities based on $(x', y')$. 
\begin{definition}
    For $F'\subseteq F$, $C'\subseteq C$, let $N'(i) :=  \{j \in C' : x'_{ij} > 0\}$, $N'(j) := \{i \in F' : x'_{ij} > 0\}$.
\end{definition}
The algorithm internally maintains two subsets $F' \subseteq F$ and $C' \subseteq C$.
Notice that $N'$ changes as the algorithm modifies $F'$ and $C'$.
\begin{definition}
    A facility $i \in F'$ is \emph{$\lambda$-heavy} for $\lambda > 0$, if $\sum_{j \in N'(i)}D(j) > \lambda T$.
\end{definition}
\Cref{heavy} opens all $\lambda$-heavy facilities for the given value of $\lambda$.
It starts with $F' = F$ and $C' = C$ and scans $F'$ for $\lambda$-heavy facilities.
It fully opens every $\lambda$-heavy facility $i \in F'$ and assigns all $N'(i)$ integrally to $i$.
Then, it discards $i$ from $F'$ and $N'(i)$ from $C'$, and continues until all facilities are processed.
\begin{algorithm}[H]
\caption{Opening Heavy Facilities}\label{heavy}
	\begin{algorithmic}[1]
		\Require{A solution $(x', y')$, $\lambda > 0$.}
		\Ensure{Partial solution $(\dx, \dy)$, sets $F'$, $C'$, such that $\sum_{j \in N'(i)}D(j) \leq \lambda T$, $\forall i \in F'$.}
		\State Initialize $F' \leftarrow F$, $C' \leftarrow C$
		\For{$i \in F'$}
		\If{$\sum_{j \in N'(i)}D(j) > \lambda T$}
		\State Initialize $C(i) \leftarrow N'(i)$
		\State $F' \leftarrow F' \setminus \{i\}$, $\dy_i = 1$
		\For{$j \in C(i)$}
		    \State $C' \leftarrow C'\setminus \{j\}$, $\dx_{ij} = 1$
		    \For{$h \in F\setminus \{i\}$}
		        \State $\dx_{hj} = 0$
		    \EndFor
		\EndFor
		\EndIf
		\EndFor
		\Return $(\dx, \dy)$, $F'$, $C'$.
	\end{algorithmic}
\end{algorithm}
Since for each $h \in F'$ we may discard some clients from $N'(h)$ after every step, facilities that were $\lambda$-heavy might become non-$\lambda$-heavy under updated $F'$ and $C'$.
\cref{heavylemma} shows that this procedure does not open too many facilities and that the load of opened facilities does not exceed $T$ by too much.
\begin{lemma}\label{heavylemma}
    Let $F', C'$ be the sets returned by \cref{heavy}.
    Then $|F \setminus F'| \leq k/\lambda$, and for each facility $i \in F\setminus F'$, $L(i, \dx) \leq \frac{\nu}{\gamma}T$.
\end{lemma}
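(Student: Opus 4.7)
The plan is to prove the two assertions separately; both follow by short charging arguments, and the main ingredient already present in the paper is the uniform lower bound $x'_{ij} \ge \gamma y'_i$ on nonzero assignments provided by \cref{preproc1}, together with the original LP load constraints.

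For the bound $|F\setminus F'| \le k/\lambda$, I would first observe that the quantities $D(j)$ are defined with respect to the \emph{original} fractional solution $(x,y)$ (not $(x',y')$), and in particular do not change as the algorithm modifies $F'$ and $C'$. Exchanging sums and using constraints \eqref{eq:scons1} and \eqref{eq:scons2} of $\sclp(T,k)$,
\[
\sum_{j \in C} D(j) \;=\; \sum_{i \in F} L(i,x) \;\le\; T\sum_{i\in F} y_i \;\le\; Tk.
\]
Now the charging: every time the algorithm opens a facility $i$ (i.e., removes it from $F'$), the test guarantees $\sum_{j \in N'(i)} D(j) > \lambda T$, and immediately afterwards all of $N'(i)$ is removed from $C'$. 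Hence each client $j$ contributes its $D(j)$ to the ``heaviness'' of at most one opened facility, so the number of opened facilities is at most $\sum_{j\in C} D(j)/(\lambda T) \le k/\lambda$.

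For the load bound on an opened facility $i$, note that at the moment $i$ is opened we set $C(i) := N'(i)$ and $\dx_{ij}=1$ for every $j \in C(i)$, and these values are never modified afterwards (any future opening of some $h \neq i$ only touches clients still in $C'$, which no longer contains any $j \in C(i)$). Therefore
\[
L(i,\dx) \;=\; \sum_{j \in C(i)} d(i,j).
\]
Since $C(i) \subseteq \{j : x'_{ij} > 0\}$, the preprocessing guarantee $x'_{ij}\ge \gamma y'_i$ applies to each such $j$, and using $y'_i \le 1$ together with the new load bound $L(i,x') \le \nu T y'_i$ yields
\[
\sum_{j \in C(i)} d(i,j) \;\le\; \sum_{j \in C(i)} \frac{d(i,j)\,x'_{ij}}{\gamma y'_i} \;\le\; \frac{L(i,x')}{\gamma y'_i} \;\le\; \frac{\nu T}{\gamma},
\]
which is exactly the claimed bound.

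Neither step is really difficult; the only subtle point to get right is the bookkeeping about which solution each quantity refers to. Specifically, one must be careful that $D(j)$ is frozen at the start (so the $Tk$ budget is valid throughout the loop), and that the bound $L(i,x')\le \nu T y'_i$ only needs to be summed over $j \in N'(i)\subseteq \{j : x'_{ij}>0\}$ rather than all of $C$, so restricting the summation is harmless. With those observations the lemma is immediate.
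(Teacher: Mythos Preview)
Your proof is correct and follows essentially the same argument as the paper: the disjointness of the sets $C(i)$ combined with the global budget $\sum_{j\in C}D(j)\le Tk$ gives the count bound, and the preprocessing guarantee $x'_{ij}\ge\gamma y'_i$ together with $L(i,x')\le\nu T y'_i$ gives the load bound. The only cosmetic remark is that the mention of $y'_i\le 1$ is not actually used in your chain of inequalities (the $y'_i$'s cancel directly), and implicitly you need $y'_i>0$ to divide by it, which is fine since a facility with $y'_i=0$ has $N'(i)=\varnothing$ and is never opened by \cref{heavy}.
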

\begin{proof}
    The set $F\setminus F'$ is exactly the set of facilities integrally opened during \cref{heavy}.
    For $i \in F\setminus F'$, set $C(i)$ in \cref{heavy} is exactly the set of clients, integrally assigned to $i$ by the algorithm.
	Observe that for every $i, h \in F\setminus F'$, $i \neq h$, the sets $C(i)$ and $C(h)$ are \emph{disjoint}.
	Hence, by feasibility of $(x, y)$,
	\[|F\setminus F'|\cdot \lambda T < \sum_{i \in F\setminus F'}\sum_{j \in C(i)}D(j) \leq  \sum_{j \in C}D(j) =  \sum_{j \in C}\sum_{i \in F}d(i, j)x_{ij} \leq \sum_{i \in F}Ty_i \leq Tk\]
	and $|F\setminus F'| < \frac{Tk}{\lambda T} = \frac{k}{\lambda}$.
	Next, by the properties of solution $(x', y')$:
	\[\nu Ty_i' \geq \sum_{j \in C(i)}d(i, j)x'_{ij} \geq  \gamma y_i'\sum_{j \in C(i)}d(i, j) \implies L(i, \dx) = \sum_{j \in C(i)}d(i, j)  \leq \frac{\nu}{\gamma}T. \] 
\end{proof}
We apply \cref{heavy} with $\lambda := 1/\eps$, and by \cref{heavylemma} this opens at most $\eps k$ additional facilities.
The load of each opened facility is at most $\frac{\nu}{\gamma}T = \frac{(1 + \eps)(\mu + 2 - \gamma)}{\eps}T = O(T/\eps)$.
For the returned sets $F'$ and $C'$, $\sum_{j \in N'(i)}D(j) \leq T/\eps$ for all $i \in F'$.
Moreover, since $j \in C'$ if and only if $j$ was not served by any $\lambda$-heavy facility (which got opened), for all $j \in C'$ we have $\sum_{i \in N'(j)}y_i' = \sum_{i \in F: x'_{ij} > 0}y_i'\geq 1/(1 + \eps)$.
Facilities in $F\setminus F'$ are all integral, and it remains to find the integral opening among facilities in $F'$.

As discussed earlier, if we reroute $i \in F'$ to $h \in F'$, to guarantee a good approximation we have to bound the term $d(h, i)|N'(i)|$.
Observe that $\sum_{j \in N'(i)}D(j)$ is an upper bound for $|N'(i)|\min_{j \in N'(i)}D(j)$.
Therefore, to get a good bound, we need to choose $h$ for $i$ so that $d(h, i)$ is at most some constant times $\min_{j \in N'(i)}D(j)$.
This requires some sophisticated clustering technique and a wise choice of facility $h$ for every such $i$.

\subsection{Clustering}

To create an integral opening over $F'$, we partition $F'$ into disjoint clusters, open some facilities in every cluster and reroute the closed ones into opened ones within the same cluster.
Our goal is to cluster $F'$ so that, if $i$ and $h$ belong to the same cluster and we reroute $h$ to $i$, $d(h, i) \leq O(\min_{j \in N'(i)}D(j))$.
Classic clustering approaches for facility-location-like problems do not work, and to achieve this bound we are required to design a novel approach.

Let $\Cc \subseteq C'$ be the set of cluster centers.
For every $j \in \Cc$, let $F'(j) \subseteq F'$ be the set of facilities belonging to the cluster centered at $j$, for $i \in F'$ let $\Cc(i)$ be the center of the cluster $i$ belongs to (I.e., $i \in F'(j) \iff \Cc(i) = j$).

The clustering procedure works as follows.
First, we form cluster centers $\Cc$ by scanning $j \in C'$ in ascending order of $D(j)$ and adding $j$ to $\Cc$ only if there are no other centers in $\Cc$ within the distance $2\rho D(j)$ from $j$.
Having determined $\Cc$, we add facilities from $F'$ to different clusters.
Most classical clustering approaches would put $i$ into $F'(s)$, if $s$ is closest to $i$ among $\Cc$.
Our approach is different: if $i \in F'$ is serving some $s \in \Cc$, we add $i \in F'(s)$ regardless the distance $d(i, s)$.
Otherwise, we consider $j \in N'(i)$ with \emph{minimum} $D(j)$ ($j$ is not a cluster center), take $s \in \Cc$ that prevented $j$ from becoming a center, and add $i$ to $F'(s)$. \Cref{fig:clustering} visualizes the clustering procedure and \cref{alg:clustering} gives its pseudocode.
\begin{algorithm}[H]
\caption{Clustering}\label{clustering}
	\begin{algorithmic}[1]
		\Require{Solution $(x', y')$, sets $F'$ and $C'$.}
		\Ensure{Centers $\Cc\subseteq C'$ and disjoint clusters $F'(s)$ for every $s \in \Cc$, $\sqcup_{s \in \Cc}F'(s) = F'$.}
		\State Initialize $\Cc \leftarrow \varnothing$, sort $j \in C'$ by the values of $D(j)$ in ascending order
		\For{$j \in C'$}
			\If{$\forall s \in \Cc : d(s, j) >  2\rho D(j)$}
				\State $\Cc \leftarrow \Cc \cup \{j\}$
			\EndIf
		\EndFor
		\State For all $s \in \Cc$, initialize $F'(s) \leftarrow \varnothing$
		\For{$i \in F'$}
			\If{$\exists s \in \Cc : i \in N'(s)$}
			    \State $\Cc(i) \leftarrow s$, $F'(s) \leftarrow F'(s) \cup \{i\}$
			\Else
				\State Let $j := \argmin_{r \in N'(i)}D(r)$
				\State Take $s \in \Cc$, such that $D(s) \leq D(j)$, $d(s, j) \leq 2\rho D(j)$ 
				\State $\Cc(i) \leftarrow s$, $F'(s) \leftarrow F'(s) \cup \{i\}$
			\EndIf 
		\EndFor
		\Return $\Cc$, $F'(s)$ for $s \in \Cc$.
	\end{algorithmic}
	\label{alg:clustering}
\end{algorithm}
\begin{figure}[H]
\begin{center}
	\begin{tikzpicture}
	
	\draw[very thick] (0,0) ellipse (2.5cm and 2cm);
	\node[above] at (0,1.25) {$N'(v)$};
	
	\draw (0,0) node[draw,circle,fill=red,minimum size=10pt,inner sep=0pt] (v) [label=above left:$v$] {};
	%\draw (1,1) node[draw,circle,fill=green,minimum size=8pt,inner sep=0pt] (v_1) [label=right:$v$] {};
	
	\draw[very thick, dashed] (3, 0) ellipse (1.5cm and 2.25cm);
	\node[above] at (3,1.5) {$N'(j)$};
	\draw (3,0) node[draw,circle,fill=red,minimum size=10pt,inner sep=0pt] (j) [label=above:$j$] {};
	
	\draw[very thick] (6, 0.75) ellipse (2.5cm and 1.5cm);
	\node[above] at (6, 1.5) {$N'(s)$};
	\draw (6, 0.75) node[draw,circle,fill=red,minimum size=10pt,inner sep=0pt] (s) [label=right:$s$] {};

	\draw(-1.1,-1.3) node[draw,fill=green,minimum size=10pt,inner sep=0pt] (u) [label=right:$u$] {};
	\draw[very thick, ->] (u) -- (v);
	
	\draw (1.9,0.4) node[draw,fill=green,minimum size=10pt,inner sep=0pt] (w) [label=above:$w$] {};
	\draw[very thick, ->] (w) -- (v);
	
	\draw (4,0.6) node[draw,fill=green,minimum size=10pt,inner sep=0pt] (i) [label=above:$i$] {};
	\draw[very thick, ->] (i) -- (s);
	
	\node[below] at (6, -1.75) {$D(v) \leq D(j) \leq D(s)$};
	
	\draw (3.5,-1.4) node[draw,fill=green,minimum size=10pt,inner sep=0pt] (h) [label=above right:$h$] {};
	\draw[very thick, dashed, ->] (h) -- (j);
	\draw[very thick, dashed, ->] (i) -- (j);
	\draw[very thick, ->] (h) -- (v);
	\end{tikzpicture}
	\caption{Here, $v, j, s \in C'$, $v, s \in \Cc$, $j\notin \Cc$, $u, w, h, i \in F'$. The bold arrow shows that a facility belongs to the cluster centered at that client, the dashed arrow shows that a particular client has minimum average distance among all clients served by a facility.}
\label{fig:clustering}
\end{center}
\end{figure}
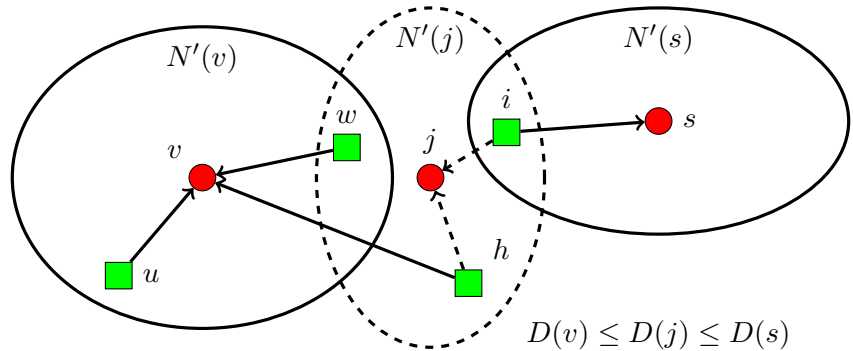
One can easily check that, after \cref{clustering} finishes, for any $s, v \in \Cc$, $s \neq v$, $d(s, v) > 2\rho\max\big(D(s), D(v)\big)$, and as a result $N'(s)$ and $N'(v)$, as well as $F'(s)$ and $F'(v)$ are disjoint.
Also, for every $j \notin \Cc$ there exists $s \in \Cc$ such that $D(s) \leq D(j)$ and $d(s, j) \leq 2\rho D(j)$, simply by construction of the algorithm.
\cref{clustering} allows us to obtain an upper bound on the distance between a facility an its cluster center, represented in terms of minimum average distance of the client served by this facility.
\begin{lemma}\label{distance}
	Let $i \in F'$, let $j = \argmin_{r \in N'(i)}D(r)$.
	Then $d(i, \Cc(i)) \leq 3\rho D(j)$.
\end{lemma}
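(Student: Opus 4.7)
The plan is to split by which branch of the clustering procedure places $i$ into its cluster, and in both branches to combine the triangle inequality
\[ d(i, \Cc(i)) \le d(i, j) + d(j, \Cc(i)) \]
with the filtering bound $d(i, j) \le \rho D(j)$ (available because $j \in N'(i)$ means $x'_{ij} > 0$). The task thereby reduces to establishing $d(j, \Cc(i)) \le 2\rho D(j)$ in each branch.

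The easy branch is when $N'(i) \cap \Cc = \varnothing$. Then $j \notin \Cc$, so $j$ must have been rejected during its scan in the first loop, and the algorithm explicitly sets $\Cc(i) = s$ to be such a blocker, satisfying $d(s, j) \le 2\rho D(j)$ by construction. Triangle inequality then gives the claim immediately.

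The harder branch is when $\Cc(i) = s$ for some $s \in N'(i) \cap \Cc$. Here filtering only yields $d(i, s) \le \rho D(s)$, and a priori $D(s) \ge D(j)$ might be arbitrarily large. The plan is to use the center-separation property enforced by the first loop to force $D(s) \le 3 D(j)$. If $D(s) \le D(j)$ there is nothing to prove. Otherwise $D(s) > D(j)$, so $j$ is scanned strictly before $s$ in the first loop. If $j$ were itself in $\Cc$, separation would require $d(j, s) > 2\rho D(s)$, contradicting $d(j, s) \le d(j, i) + d(i, s) \le \rho D(j) + \rho D(s) \le 2\rho D(s)$; hence $j \notin \Cc$. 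Then $j$ has a blocker $v \in \Cc$ with $D(v) \le D(j)$ and $d(v, j) \le 2\rho D(j)$, and $v \ne s$ because $D(v) \le D(j) < D(s)$. Applying separation between $v$ and $s$ gives $d(v, s) > 2\rho D(s)$, while the four-point triangle chain gives
\[ d(v, s) \le d(v, j) + d(j, i) + d(i, s) \le 2\rho D(j) + \rho D(j) + \rho D(s) = 3\rho D(j) + \rho D(s), \]
which together force $D(s) < 3 D(j)$. Thus $d(i, s) \le \rho D(s) < 3\rho D(j)$.

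The main obstacle is this last subcase: the raw filtering bound $\rho D(s)$ is too weak when $D(s) \gg D(j)$, and one must carry out a four-point triangle argument through the blocker $v$ of $j$ to convert the qualitative center-separation property into a quantitative bound on $D(s)$ in terms of $D(j)$. Once this is in place, the two clustering branches merge cleanly into the uniform bound $3\rho D(j)$.
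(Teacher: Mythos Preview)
Your proposal is correct and follows essentially the same approach as the paper: the same case split on whether $i$ serves a center, the same use of the filtering bound $d(i,j)\le \rho D(j)$, and the same four-point chain through the blocker of $j$ to force $D(s)<3D(j)$ in the hard case. The only cosmetic difference is that you derive $j\notin\Cc$ via the center-separation inequality, whereas the paper invokes the disjointness of the sets $N'(s)$ for distinct centers; both are one-line consequences of the clustering construction.
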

\begin{proof}
    Let $\Cc(i) = s$.
    There are two cases to distinguish.
    \begin{itemize}
        \item $i \notin N'(s)$ (this case is shown by clients $j, v$ and facility $h$ in \Cref{fig:clustering}).
        By construction of \cref{clustering}, client $s$ is exactly the one that prevented $j$ from becoming a cluster center, therefore $D(s) \leq D(j)$ and $d(j, s) \leq 2 \rho D(j)$.
        Thus, by triangle inequality $d(i, s) \leq d(i, j) + d(j, s) \leq \rho D(j) + 2\rho D(j)= 3\rho D(j)$.
        \item $i \in N'(s)$.
        Then $D(j) = \min_{r \in N'(i)}D(r) \leq D(s)$.
        If $s = j$ or $D(s) = D(j)$, then $d(i, s) \leq \rho D(j)$ automatically.
        Suppose that $s \neq j$, and $D(j) < D(s)$, then $j \notin \Cc$, as $s \in \Cc$ and $i \in N'(j) \cap N'(s)$
        (this case is shown by clients $j, s$ and facility $i$ in \Cref{fig:clustering}).
        Hence, there exists some $s' \in \Cc$ that prevented $j$ from becoming a cluster center, so $D(s') \leq D(j)$ and $d(s', j) \leq 2\rho D(j)$.
        It is easy to see that $D(j)$ must be strictly greater than zero, and since both $s$ and $s'$ are cluster centers, $d(s', s) > 2\rho D(s)$.
        So, by triangle inequality,
        \[2\rho D(s) < d(s', s) \leq  d(s', j) + d(i, j) + d(i, s) \leq  2\rho D(j) + \rho D(j) + \rho D(s),\]
        implying
        \[2\rho D(s) \leq 3\rho D(j) + \rho D(s) \implies D(s) \leq 3D(j).\]
        Since $i \in N'(s)$, it immediately follows that $d(i, s) \leq \rho D(s) \leq 3 \rho D(j)$.
    \end{itemize}
\end{proof}
By applying the triangle inequality once more, we get the desired upper bound on the distances between any two facilities within the same cluster.
\begin{corollary}\label{dist}
    Let $i, h \in F'$, such that $\Cc(i) = \Cc(h)$.
    Let $j = \argmin_{r \in N'(i)}D(r)$ and $v = \argmin_{w \in N'(h)}D(w)$.
    Then $d(i, h) \leq 6\rho \max\big(D(j), D(v)\big)$.
\end{corollary}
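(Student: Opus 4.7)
The proof is a direct consequence of \cref{distance} combined with a single application of the triangle inequality, so there is essentially no obstacle; the whole task is to pass the bound through the common cluster center.

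The plan is to let $s := \Cc(i) = \Cc(h)$ be the shared center and invoke \cref{distance} twice, once for $i$ and once for $h$. For $i$, with $j = \argmin_{r \in N'(i)} D(r)$, the lemma yields $d(i, s) \leq 3\rho D(j)$. Applying it to $h$, with $v = \argmin_{w \in N'(h)} D(w)$, gives $d(h, s) \leq 3\rho D(v)$. These two inequalities are exactly what the lemma provides, so no new clustering properties need to be revisited.

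Next I would chain the two bounds through the triangle inequality applied at $s$:
\begin{equation*}
d(i, h) \;\leq\; d(i, s) + d(s, h) \;\leq\; 3\rho D(j) + 3\rho D(v) \;\leq\; 6\rho \max\bigl(D(j), D(v)\bigr),
\end{equation*}
which is the desired bound. This uses only the metric property of $d$ (in particular the triangle inequality on $F \cup C$) and the fact that $s$ lies in $C'$, so the distances $d(i, s)$ and $d(s, h)$ are well defined in the metric. No additional case analysis on whether $i$ or $h$ itself serves $s$ is needed, because \cref{distance} already handles both cases internally.

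The only minor point worth stating is that when $D(j) = 0$ (respectively $D(v) = 0$), \cref{distance} still gives $d(i, s) = 0$ (respectively $d(h, s) = 0$), so the bound remains valid and tight in the degenerate case. Hence the corollary follows immediately, and it is precisely the quantity we will use in the next section to argue that rerouting a closed facility to an opened one within the same cluster inflates the load of the receiving facility by only an $O(\rho) \cdot \min_{r \in N'(\cdot)} D(r)$ factor, which is the key ingredient for the subsequent rerouting analysis.
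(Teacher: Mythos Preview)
Your proof is correct and follows exactly the approach the paper intends: apply \cref{distance} to each of $i$ and $h$ with the common center $s = \Cc(i) = \Cc(h)$, then use the triangle inequality through $s$ to obtain $d(i,h) \leq 3\rho D(j) + 3\rho D(v) \leq 6\rho \max(D(j), D(v))$.
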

Another useful observation is that $\sum_{i \in F'(s)}y_i' \geq 1/(1 + \eps)$ for every cluster center $s \in \Cc$.
It follows from $N'(s) \subseteq F'(s)$ and $\sum_{i \in N'(s)}y_i' \geq 1/(1 + \eps)$.

\subsection{Rerouting}

The last part of our rounding algorithm is opening some facilities in every cluster and rerouting the closed ones.
For $s \in \Cc$ we open $\lfloor (1 + \eps)\sum_{u \in F'(s)}y_u'\rfloor$ facilities in cluster $F'(s)$, prioritizing facilities $i$ with minimum values of $\min_{r \in N'(i)}D(r)$.
Since $\sum_{u \in F'(s)}y_u' \geq 1/(1 + \eps)$, we will open at least one facility in every cluster $F'(s)$ for $s \in \Cc$.
Then, the demand of each closed facility in $F'(s)$ is redistributed at an equal fraction between all the opened ones in $F'(s)$, I.e. we reroute it to all opened facilities in $F'(s)$.
This gives us an integral opening $\dy$ over facilities in $F'$ and a fractional assignment $\dx$ over clients in $C'$.

\cref{reroutlemma} shows that by opening $|K_s| = \lfloor (1 + \eps)\sum_{u \in F'(s)}y_u'\rfloor$ facilities in cluster $F'(s)$ and rerouting all closed facilities in $F'(s)$, we open at most $(1 + 3\eps)k$ facilities in total, and the load of every opened facility in $F'$ exceeds $T$ at most by a constant factor.
\begin{algorithm}[H]
	\caption{Rerouting}\label{rerouting}
	\begin{algorithmic}[1]
	    \Require{Solution $(x', y')$, cluster centers $\Cc$ and clusters $F'(s)$ for $s \in \Cc$.}
	    \Ensure{Solution $(\dx, \dy)$, sets of opened facilities $K_s$ for $s \in \Cc$.}
		\For{$s \in \Cc$}
		    \State Initialize $K_s \leftarrow \varnothing$
		    \State Sort $i \in F'(s)$ in ascending order of $\min_{r \in N'(i)}D(r)$
			\For{$i \in F'(s)$}
			    \If{$y_i' = 0$} \State $\dy_i \leftarrow 0$
				\ElsIf{$|K_s| + 1 \leq \lfloor (1 + \eps)\sum_{u \in F'(s)}y_u'\rfloor$}
					\State $K_s \leftarrow K_s \cup \{i\}$, $\dy_i \leftarrow 1$
					\For{$j \in N'(i)$}
				        \State $\dx_{ij} \leftarrow x'_{ij}$
				    \EndFor
				\Else
					\State $\dy_i \leftarrow 0$
					\For{$r \in N'(i)$}
					    \For{$h \in K_s$}
					        \State $\dx_{ir} \leftarrow 0$
					        \State $\dx_{hr} \leftarrow \dx_{hr} + x'_{ir}/|K_s|$
					   \EndFor
					\EndFor
				\EndIf
			\EndFor
		\EndFor
		\Return $(\dx, \dy)$, $K_s$ for $s \in \Cc$.
	\end{algorithmic}
\end{algorithm}
\begin{lemma}\label{reroutlemma}
    After \cref{rerouting}, for every facility $h \in F'$, $\dy_h = 1$: $L(h, \dx) \leq 3(\nu + 4\rho\lambda)T$.
    Moreover, $\sum_{h \in F'}\dy_h \leq (1 + 3\eps)k$.
\end{lemma}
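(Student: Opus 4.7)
The plan is to split the proof into bounding $(a)$ the total number of facilities opened over $F'$ and $(b)$ the load of each opened facility. Part $(a)$ is essentially a summing argument; part $(b)$ is the main technical step and combines \cref{dist}, the non-heaviness bound $\sum_{r \in N'(i)} D(r) \leq \lambda T$ guaranteed by \cref{heavy}, the preprocessed load bound $L(i, x') \leq \nu T y'_i$, and the ascending-order priority used inside \cref{rerouting}.

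For $(a)$, observe that by construction $|K_s| \leq (1+\eps) \sum_{u \in F'(s)} y'_u$ for every $s \in \Cc$. Since the clusters partition $F'$, summing and applying the filtering guarantee $\sum_{i \in F} y'_i \leq (1+\eps)k$ yields
\[ \sum_{h \in F'} \dy_h \;=\; \sum_{s \in \Cc} |K_s| \;\leq\; (1+\eps) \sum_{u \in F'} y'_u \;\leq\; (1+\eps)^2 k \;\leq\; (1+3\eps) k, \]
where the final step uses $\eps \in (0, 1)$.

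For $(b)$, fix any $h \in K_s$ and split $L(h, \dx)$ into the contribution from $h$'s original fractional clients and the rerouted contribution from each closed $i \in F'(s) \setminus K_s$ with $y'_i > 0$. The first piece is bounded by $L(h, x') \leq \nu T$. For a fixed closed $i$, the triangle inequality gives $\sum_r d(h, r) x'_{ir} \leq d(h, i) \sum_r x'_{ir} + L(i, x')$, the second term being at most $\nu T y'_i$. For the first term, let $j_h := \argmin_{r \in N'(h)} D(r)$ and $v_i := \argmin_{r \in N'(i)} D(r)$. Since \cref{rerouting} fills $K_s$ in ascending order of $\min_{r \in N'(\cdot)} D(r)$ among facilities with nonzero $y'$, and $h \in K_s$ while $i \notin K_s$ with $y'_i > 0$, we must have $D(j_h) \leq D(v_i)$, hence by \cref{dist} $d(h, i) \leq 6\rho D(v_i)$. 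Combined with $\sum_r x'_{ir} \leq |N'(i)|\, y'_i$ (from $x'_{ir} \leq y'_i$) and $D(v_i) |N'(i)| \leq \sum_r D(r) \leq \lambda T$ (non-heaviness), the first term is at most $6\rho\lambda T y'_i$. Summing over closed $i$, the rerouted load at $h$ is at most $\frac{(\nu + 6\rho\lambda) T}{|K_s|} \sum_{i \in F'(s)} y'_i$. Using $|K_s| = \lfloor (1+\eps)\sum_u y'_u \rfloor \geq \tfrac{1}{2}(1+\eps) \sum_u y'_u$ (valid since $(1+\eps)\sum_u y'_u \geq 1$), this ratio is at most $2$, giving $L(h, \dx) \leq \nu T + 2(\nu + 6\rho\lambda) T = 3(\nu + 4\rho\lambda) T$.

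The step I expect to be the main obstacle is the priority argument securing $D(j_h) \leq D(v_i)$ for every (opened $h$, closed $i$) pair in the same cluster; one must check that the sorting, together with the algorithm's handling of facilities with $y'_i = 0$, really does leave every surviving closed $i$ no earlier in the order than any opened $h$. All remaining steps are bookkeeping on top of the guarantees already established.
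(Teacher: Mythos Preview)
Your proposal is correct and follows essentially the same route as the paper's proof: the opening bound via $|K_s|\le(1+\eps)\sum_{u\in F'(s)}y'_u$ and $(1+\eps)^2\le 1+3\eps$, and the load bound via triangle inequality, \cref{dist}, the non-heaviness inequality $|N'(i)|\min_r D(r)\le\lambda T$, and the ratio $\frac{\sum_u y'_u}{\lfloor(1+\eps)\sum_u y'_u\rfloor}\le 2$. Your flagged concern about the priority argument is handled exactly as you suspect: facilities with $y'_i=0$ have $x'_{i\cdot}\equiv 0$ and are neither opened nor rerouted, so among the $y'_i>0$ facilities the sorted order guarantees $D(j_h)\le D(v_i)$ for every opened $h$ and rerouted $i$.
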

\begin{proof}
    Since for every $s \in \Cc$ we have $\sum_{u \in F'(s)}y_u' \geq 1/(1 + \eps)$, $\lfloor (1 + \eps)\sum_{u \in F'(s)}y_u'\rfloor \geq 1$ and $|K_s| \geq 1$.
    After filtering and preprocessing steps, $\sum_{u \in F'}y_u' \leq (1 + \eps)k$, so
    \[\sum_{h \in F'}\dy_h = \sum_{s \in \Cc}\sum_{h \in K_s}\dy_h \leq (1 + \eps)\sum_{s\in \Cc}\sum_{u \in F'(s)}y'_u = (1 + \eps)\sum_{u \in F'}y_u' \leq (1 + \eps)^2k \leq (1 + 3\eps)k. \]

    Next, let $h \in F'$, $\dy_h = 1$, and let $\Cc(h) = s$.
    Take $i \in F'(s)$ that was closed by \cref{rerouting}.
    The demand of every $r \in N'(i)$ served by $i$ gets split between all opened facilities from $K_s$ at an equal fraction.
    So, after we reroute $i$ into $h$, the \emph{additional} load of $h$ is
    \[\sum_{r \in N'(i)}d(h, r)\frac{x'_{ir}}{|K_s|} \leq \frac{1}{|K_s|}\sum_{r \in N'(i)}d(i, r)x'_{ir} + \frac{d(h, i)}{|K_s|}\sum_{r \in N'(i)}x'_{ir}.\]
    Recall that $\sum_{r \in N'(i)}d(i, r)x'_{ir} = L(i, x')\leq \nu Ty_i'$.
    Let $v = \argmin_{w \in N'(h)}D(w)$ and $j = \argmin_{r \in N'(i)}D(r)$.
    Since $h$ was opened, and $i$ was closed, $D(v) \leq D(j)$, and by \cref{dist} $d(h, i) \leq 6\rho D(j)$.
    Hence, the \emph{additional} load of $h$ is at most
    \begin{multline*}
        \frac{1}{|K_s|}\sum_{r \in N'(i)}d(i, r)x'_{ir} + \frac{d(h, i)}{|K_s|}\sum_{r \in N'(i)}x'_{ir} \leq \frac{\nu Ty_i'}{|K_s|} + \frac{6\rho D(j)}{|K_s|} \sum_{r \in N'(i)}x'_{ir}\leq \\
        \leq \frac{\nu Ty_i'}{|K_s|} + \frac{6\rho D(j)}{|K_s|} \sum_{r \in N'(i)}y_i'   =\frac{y_i'}{|K_s|}\left(\nu T + 6\rho \cdot |N'(i)|D(j) \right) \leq \\ \leq  \frac{y_i'}{|K_s|}\left(\nu T + 6\rho\cdot \lambda T \right) = \frac{y_i'}{|K_s|}(\nu + 6\rho\lambda)T.
    \end{multline*}
    We used the bound $|N'(i)|\min_{r \in N'(i)}D(r) \leq \sum_{r \in N'(i)}D(r) \leq \lambda T$ for non-$\lambda$-heavy facilities.
    Hence, the \emph{total additional} load of $h$, gained after rerouting all closed facilities $i \in F'(s)\setminus K_s$ in its cluster, is at most
    \begin{multline*}
        \sum_{i \in F'(s)\setminus K_s}\sum_{r \in N'(i)}d(h, r)\frac{x'_{ir}}{|K_s|} \leq \sum_{i \in F'(s)\setminus K_s}\frac{y_i'}{|K_s|}(\nu + 6\rho\lambda)T = \\ =(\nu + 6\rho\lambda)T\cdot \frac{\sum_{i \in F'(s)\setminus K_s}y_i'}{\lfloor (1 + \eps)\sum_{u \in F'(s)}y_u'\rfloor} \leq (\nu + 6\rho\lambda)T\cdot \frac{(1 + \eps)\sum_{i \in F'(s)}y_i'}{\lfloor (1 + \eps)\sum_{u \in F'(s)}y_u'\rfloor}\leq\\ \leq (2\nu + 12\rho\lambda)T.
    \end{multline*}
    The load of $h$ before rerouting was $L(h, x') \leq \nu Ty'_h \leq \nu T$, so after \cref{rerouting} the total load of facility $h$ is $L(h, \dx)\leq  3(\nu + 4\rho\lambda)T$.
    This holds for every $h \in K_s$ and every center $s \in \Cc$.
\end{proof}

Now we are ready to complete the analysis of the rounding algorithm.
\begin{proof}[Proof of \cref{scload}]
    We claim that, having completed all the intermediate steps from filtering and up to \cref{rerouting} included, with parameter values $\rho = \frac{1 + \eps}{\eps}$, $\gamma = \eps/(1 + \eps)$ and $\lambda = 1/\eps$, for the resulting solution $(\dx, \dy)$ it holds:
    \begin{enumerate}
        \item $\dy$ is integral, and $\sum_{i \in F}\dy_i \leq (1 + 4\eps)k$;
        \item for every $j \in C$, $1 \geq \sum_{i \in F}\dx_{ij} \geq 1/(1 + \eps)$, and if $j \in C\setminus C'$, $\sum_{i \in F}\dx_{ij} = 1$;
        \item for every $i \in F$, $L(i, \dx) \leq 12\left(1 +\frac{1 + \eps}{\eps^2}\right)T\dy_i$.
    \end{enumerate}  
	
	By \cref{heavylemma}, \cref{heavy} could open additional $\eps k$ facilities, so $\sum_{i \in F\setminus F'}\dy_i \leq \eps k$.
	By \cref{reroutlemma}, $\sum_{h \in F'}\dy_h \leq (1 + 3\eps)k$.
	This gives us total opening $\sum_{i \in F}\dy_i \leq (1 + 4\eps)k$.
	
	Next, take $j \in C$.
	If $j$ was serving some $\lambda$-heavy facility $i$, then $j \in C\setminus C'$, and \cref{heavy} sets $\dx_{ij} = 1$ and $\dx_{hj} = 0$ for all other facilities $h \neq i$.
	If $j$ did not serve any $\lambda$-heavy facility, then $j \in C'$, and we get $1 \geq \sum_{i \in F}\dx_{ij} = \sum_{i \in F}x'_{ij} \geq 1/(1 + \eps)$ after rerouting.
	
	Finally, if $i \in F$ was $\lambda$-heavy, by \cref{heavylemma} $L(i, \dx) \leq \frac{\nu}{\gamma}T \leq \frac{4}{\eps}T\dy_i$.
    Let $i$ be non-$\lambda$-heavy, I.e. $i \in F'$.
	If $\dy_i = 0$, I.e. $i$ is closed, then \cref{rerouting} assures that $L(i, \dx) = 0$.
	If $\dy_i = 1$, then by \cref{reroutlemma} we have  $L(i, \dx) \leq 3(\nu + 4\rho\lambda)T \leq 3\left(4 + 4\frac{1 + \eps}{\eps^2}\right)T\dy_i = 12\left(1 +\frac{1 + \eps}{\eps^2}\right)T\dy_i$.
	
	For every $(i, j) \in F' \times C'$, we multiply the assignment variables $\dx_{ij}$ by $1/\left(\sum_{i \in F}\dx_{ij}\right)$.
	Since $\sum_{i \in F}\dx_{ij} \geq 1/(1 + \eps)$, the load of every opened facility in $F'$ gets increased at most by a factor of $1 + \eps \leq 2$.
	After this change, $\sum_{i \in F}\dx_{ij} = 1$ and $L(i, \dx) \leq 24\left(1 + \frac{1 + \eps}{\eps^2}\right)T$ for all $j \in C$, $i \in F$.
	
	The solution $(\dx, \dy)$ has integral opening $\dy$, and every client $j \in C$ is served fully (I.e. $\sum_{i \in F}\dx_{ij} = 1$).
	By applying minimum makespan rounding algorithm \cite{LST90}, we get an integral assignment with respect to facilities opened in $\dy$, sacrificing another factor of 2 in approximation.
	We obtain a $\left(48\left(1 + \frac{1 + \eps}{\eps^2}\right), 1 + 4\eps\right)$-approximate solution to $\SC(T, k)$ problem, and the whole algorithm clearly runs in polynomial-time.
\end{proof}

\section{$(2 + \eps, O(1/\eps^2))$-approximation to $\SC(T, k)$}
\label{sec:algo2}

Similar to the $(O(1/\eps^2), 1 + \eps)$-approximation to $\SC(T, k)$, our goal is, given some solution $(x, y)$ to $\sclp(T, k)$, find an integral opening $\dy$ and fractional assignment $\dx$, and then apply minimum makespan rounding \cite{LST90}, which will prove \cref{scsize}.
However, this time we need to assure that $L(i, \dx) \leq (1 + \eps/2)T$ for every $i \in F$.
To achieve this, we use the same steps, applied in different order and with different values of parameters.

\subsection{Preprocessing and opening heavy facilities}

Let $(x, y)$ be a feasible fractional solution to $\sclp(T, k)$, and let $\eps \in (0, 1)$.
Straightahead, we apply preprocessing algorithm from \cref{preproc1} to $(x, y)$ with parameters $\mu = 1$ and $\gamma = \frac{1}{1 + \eps}$.
This will give us a solution $(x', y')$ such that 
\begin{enumerate}
    \item $y' = y$, and $\sum_{i \in F}y_i' \leq k$,
    \item for all $(i, j) \in F \times C$, $y'_i \geq x_{ij}'$ and if $x'_{ij} > 0$ then $x'_{ij} \geq \gamma y_i' = y_i'/(1 + \eps)$,
    \item for every $j \in C$, $1 \geq \sum_{i \in F}x_{ij}' \geq 1 - \gamma = \eps/(1 + \eps)$, and
    \item for every $i \in F$, $L(i, x') \leq (\mu + 2 - \gamma)Ty_i' = (2 + \frac{\eps}{1 + \eps})Ty_i' = \nu Ty_i'$.
\end{enumerate}

In this algorithm, we overuse the notation and define $D(j)$ with respect to assignment $x'$.
\begin{definition}
    For $j \in C$, let $D(j) := \sum_{i \in F}d(i, j)x'_{ij}$, the average facility distance to client $j$.
\end{definition}
The definitions of $N'(i)$, $N'(j)$ for $i \in F'$, $j \in C'$, given $F'\subseteq F$ and $C'\subseteq C$, are the same.
\begin{definition}
    For $F'\subseteq F$, $C'\subseteq C$, let $N'(i) :=  \{j \in C' : x'_{ij} > 0\}$, $N'(j) := \{i \in F' : x'_{ij} > 0\}$.
\end{definition}
\begin{definition}
    A facility $i \in F'$ is \emph{$\lambda$-heavy} for $\lambda > 0$, if $\sum_{j \in N'(i)}D(j) > \lambda T$.
\end{definition}
We apply \cref{heavy} to $(x', y')$ with $\lambda := \eps^2/15$.
Observe that
\[\sum_{j \in C}D(j) = \sum_{i \in F}\sum_{j \in C}d(i, j)x'_{ij} \leq \sum_{i \in F}\nu Ty_i' \leq \nu Tk.\]
Hence, applying a similar analysis as in \cref{heavylemma}, we open at most $\frac{\nu}{\lambda}k = O(k/\eps^2)$ additional facilities, and the load of every opened facility is at most $\frac{\nu}{\gamma}T = (1 + \eps)\left(2 + \frac{\eps}{1 + \eps}\right)T = (2 + 3\eps)T$.

For the returned sets $F'$ and $C'$, $\sum_{j \in N'(i)}D(j) \leq \lambda T = \eps^2T/15$, for every $i \in F'$.
As before, it remains to find the integral opening among facilities in $F'$.
However, there may be clients $j \in C'$, for which preprocessing step might have dropped a very huge portion of their demand, as the best bound we have is $\sum_{i \in F'}x'_{ij} \geq \eps/(1 + \eps)$.
Just for the same reason, the opening $\sum_{i \in N'(j)}y_i'$ may be too small for some clients $j \in C'$, so we cannot apply the clustering and rerouting steps to solution $(x', y')$, as we did in \cref{sec:algo1}, without loosing a lot in both approximation factors, we even do not have any distance upper bounds.
We are going to handle these issues by applying a specific filtering step to $(x', y')$, bounding the distance between facilities and clients they serve, as well retrieving the lost demand of every client in $C'$.

\subsection{Filtering}

We apply filtering to the restriction of $(x', y')$ on $F' \times C'$, however, the filtering process will be quite different from \cite{LV92}.
We will rely a lot on the fact that we now operate with non-$\lambda$-heavy facilities only.
\begin{definition}
    Let $\rho := \frac{(1 + \eps)^2}{\eps^2}$.
    For every $j \in C$ define $F'_j := \{i \in F' : d(i, j)\leq \rho D(j)\}$.
\end{definition}
\begin{lemma}\label{filter}
    For every $j \in C'$, $\sum_{i \in F'_j}x'_{ij} \geq 1/(\rho\eps) = \eps/(1 + \eps)^2$.
\end{lemma}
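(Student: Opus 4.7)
The plan is a standard Markov-style filtering argument applied to the complement of $F_j'$, combined with a bookkeeping observation about which fractional mass survives the heavy facility opening step.

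First, I would upper bound the total fractional mass of $j$ that is assigned to ``far'' facilities. By the definition of $D(j) = \sum_{i \in F} d(i,j) x'_{ij}$ and nonnegativity, we have
\[ D(j) \;\geq\; \sum_{i \in F' \setminus F'_j} d(i,j)\, x'_{ij} \;>\; \rho D(j)\cdot \sum_{i \in F' \setminus F'_j} x'_{ij}, \]
where the strict inequality uses the defining condition $d(i,j) > \rho D(j)$ for $i \in F' \setminus F'_j$. Assuming $D(j) > 0$ (the $D(j)=0$ case is trivial because then every $i$ with $x'_{ij}>0$ satisfies $d(i,j)=0 \leq \rho D(j)$, so $F'_j$ contains all of $N'(j)$), dividing gives $\sum_{i \in F' \setminus F'_j} x'_{ij} < 1/\rho = \eps^2/(1+\eps)^2$.

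Next, I would show that for $j \in C'$ the full fractional assignment of $j$ actually lies inside $F'$, so no mass is ``lost'' to heavy facilities. The key observation is that by the design of \cref{heavy}, a client $j$ remains in $C'$ at termination only if it was never included in $N'(i)$ for any heavy facility $i$ that got opened. Since $j$ is never removed from $C'$, the condition $j \notin N'(i)$ at the time $i$ is processed forces $x'_{ij}=0$ (the alternative, $j \notin C'$, is impossible). Consequently, for every $i \in F \setminus F'$ we have $x'_{ij}=0$, so
\[ \sum_{i \in F'} x'_{ij} \;=\; \sum_{i \in F} x'_{ij} \;\geq\; 1-\gamma \;=\; \frac{\eps}{1+\eps}, \]
using property (3) of the preprocessed solution with $\gamma = 1/(1+\eps)$.

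Finally, combining the two bounds yields
\[ \sum_{i \in F'_j} x'_{ij} \;=\; \sum_{i \in F'} x'_{ij} \;-\; \sum_{i \in F' \setminus F'_j} x'_{ij} \;\geq\; \frac{\eps}{1+\eps} - \frac{\eps^2}{(1+\eps)^2} \;=\; \frac{\eps}{(1+\eps)^2} \;=\; \frac{1}{\rho\eps}, \]
as claimed. The only conceptually nontrivial step is the second one (pinning down that mass on heavy facilities is zero for surviving clients); the rest is just Markov's inequality applied to the distance distribution at $j$, and the arithmetic is designed exactly so $\rho = (1+\eps)^2/\eps^2$ makes the two terms match up to give $\eps/(1+\eps)^2$.
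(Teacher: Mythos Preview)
Your proof is correct and follows essentially the same approach as the paper: a Markov-type bound on the far mass in $F'\setminus F'_j$, together with the observation that clients surviving in $C'$ have all their $x'$-mass on $F'$, then subtracting. You are somewhat more explicit than the paper about the $D(j)=0$ edge case and about why $x'_{ij}=0$ for $i\in F\setminus F'$, but the argument is the same.
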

\begin{proof}
    Every $j \in C'$ was served by $F'$ only, therefore $D(j) = \sum_{i \in F'}d(i, j)x'_{ij}$.
    Observe that at most a portion of $1/\rho$ demand of $j$ can be served by facilities not in $F'_j$.
    Otherwise,
    \[D(j) = \sum_{i \in F'}d(i, j)x'_{ij} \geq \sum_{i \in F'\setminus F'_j}d(i, j)x'_{ij} \geq \rho D(j)\sum_{i \in F'\setminus F'_j}x'_{ij} > \rho D(j)\cdot \frac{1}{\rho} = D(j), \]
    a contradiction.
    Hence, $\sum_{i \in F'\setminus F'_j}x'_{ij} \leq 1/\rho$.
    Since $\sum_{i \in F'}x'_{ij} \geq \eps/(1 + \eps)$ for all $j \in C'$, we have
    \[\sum_{i \in F'_j}x'_{ij} = \sum_{i \in F'}x'_{ij} - \sum_{i \in F'\setminus F'_j}x'_{ij} \geq \frac{\eps}{1 + \eps} - \frac{\eps^2}{(1 + \eps)^2} = \frac{\eps}{(1 + \eps)^2} = \frac{1}{\rho\eps}.\]
\end{proof}
We construct a new solution $(\hx, \hy)$ as follows:
\[\text{for all $(i, j) \in F' \times C'$,}\qquad\qquad \hx_{ij} = \begin{cases} 
    0,& i \notin F'_j;\\
    \frac{x'_{ij}}{\sum_{i \in F'_j}x'_{ij}},& i \in F'_j;
\end{cases}\qquad \hy_i = \min\left(1, \rho\eps y_i'\right).\]
Clearly, $\sum_{i \in F'}\hy_i \leq \rho\eps \sum_{i \in F'}y_i' \leq \rho\eps k = O(k/\eps)$.
Also, by \cref{filter}, $\hx_{ij} \leq \min(1, \rho\eps x'_{ij}) \leq \hy_i$ for every $(i, j) \in F' \times C'$.
To bound $L(i, \hx)$ for $i \in F'$, recall that $i$ is non-$\lambda$-heavy, therefore $\sum_{j \in N'(i)}D(j) \leq \lambda T = \eps^2T/15$.
Since $\hx_{ij} > 0$ if and only if $x'_{ij} > 0$ and $d(i, j) \leq \rho D(j)$, %$\{j \in C' : \hx_{ij} > 0\}\subseteq N'(i)$, so
\[\lambda T\hy_i \geq \sum_{j \in N'(i)}D(j)\hy_i \geq \sum_{\substack{j \in N'(i)\\s.t.\,\hx_{ij} > 0}}D(j)\hy_i \geq \frac{1}{\rho} \sum_{\substack{j \in N'(i)\\s.t.\,\hx_{ij} > 0}}d(i, j)\hy_i \geq \frac{1}{\rho} \sum_{\substack{j \in N'(i)\\s.t.\,\hx_{ij} > 0}}d(i, j)\hx_{ij},\]
implying
\[L(i, \hx) = \sum_{\substack{j \in N'(i)\\s.t.\,\hx_{ij} > 0}}d(i, j)\hx_{ij} \leq \rho\lambda T\hy_i= \frac{\rho\eps^2}{15}T\hy_i = \frac{(1 + \eps)^2}{15}T\hy_i.\]
Also, for every $j \in C'$ we now have $\sum_{i \in F'}\hx_{ij} = 1$ and 
$\sum_{i: \hx_{ij} > 0}\hy_i \geq 1$.

Since $\{i \in F' : \hx_{ij} > 0\}\subseteq N'(j)$ and $\{j \in C' : \hx_{ij} > 0\}\subseteq N'(i)$, we will abuse the notation and redefine $N'(i)$ and $N'(j)$ in terms of assignment $\hx$.
Let $\hat{\nu} := \frac{(1 + \eps)^2}{15}$.
It holds for $(\hx, \hy)$:
\begin{enumerate}
    \item $\sum_{i \in F'}\hy_i \leq \rho\eps k = O(k/\eps)$,
    \item for all $(i, j) \in F' \times C'$, $\hy_i \geq \hx_{ij}$ and if $\hx_{ij} > 0$ then $d(i, j) \leq \rho D(j)$,
    \item for every $j \in C'$, $\sum_{i \in F'}\hx_{ij} = 1$ and $\sum_{i \in N'(j)}\hy_i \geq 1$,
    \item for every $i \in F'$, $L(i, \hx) \leq \hat{\nu} T\hy_i$.
\end{enumerate}

\subsection{Finishing the algorithm}

Now we can correctly use our clustering and rerouting algorithms with $(\hx, \hy)$.
We subsequently apply \cref{clustering} and \cref{rerouting} to $(\hx, \hy)$ with newly defined sets $N'$ for $F'$ and $C'$, with corresponding values of of parameters $\lambda, \rho$ and $\nu \equiv \hat{\nu}$, obtaining the integral opening $\dy$ and possibly fractional assignment $\dx$ over $(F', C')$.
By \cref{reroutlemma}, for $h \in F'$: $\dy_h = 1$,
\[L(h, \dx) \leq 3(\hat{\nu} + 4\rho\lambda)T = 3\left(\frac{(1 + \eps)^2}{15} + 4\frac{(1 + \eps)^2}{\eps^2}\cdot \frac{\eps^2}{15}\right) = (1 + \eps)^2T \leq (1 + 3\eps)T,\]
and we open at most $(1 + \eps)\sum_{i \in F'}\hy_i = O(k/\eps)$ facilities.

Since for every $j \in C$ we have $\sum_{i \in F}\dx_{ij} = 1$, there is no need to modify fractional variables $\hx$ any further.
Observe that all $i \in F\setminus F'$ serve $j \in C\setminus C'$ only, these $j$ are assigned to $i \in F\setminus F'$ integrally, and for all $i \in F\setminus F'$ we have $L(i, \dx) \leq (2 + 3\eps)T$.
Therefore, it remains to obtain integral assignment over $(F', C')$, where for every $i \in F'$ we have $L(i, \dx) \leq (1 + 3\eps)T$.
By applying minimum makespan rounding algorithm \cite{LST90} to the restriction of $(\dx, \dy)$ on $(F', C')$, we get integral assignment, sacrificing a factor of 2 in load approximation for $i \in F'$, resulting in maximum load of the final solution at most $(2 + 6\eps)T$.
\cref{heavy} might have opened at most $O(k/\eps^2)$ additional facilities, so we obtain a $\left(2 + 6\eps, O(1/\eps^2)\right)$-approximate solution to $\SC(T, k)$ problem, and the whole algorithm clearly runs in polynomial-time, proving \cref{scsize}.

\section*{Acknowledgements}

We are very grateful to Ola Svensson for influential discussions at multiple stages of this work.

\bibliographystyle{alpha}
\bibliography{references}

\appendix
\newpage
\section{Preprocessing}
\label{sec:appa}
\begin{theor}[\cref{preproc1} restated]\label{preproc}
    Let $(x, y)$ be  such that, for all $i \in F$, $L(i, x) \leq \mu T y_i$ for some constant $\mu \geq 1$ and all other constraints of $\sclp(T, k)$ on variables $x$ are satisfied.
    There exists a polynomial time algorithm that, given such solution $(x, y)$ and a constant $\gamma \in (0, 1)$, finds a solution $(x', y')$ such that
    \begin{enumerate}
        \item $y' = y$, and if $x_{ij} = 0$, then $x'_{ij} = 0$;
        \item for every $(i, j) \in F\times C$, $y'_i \geq x'_{ij}$, and if $x'_{ij} > 0$, then $x'_{ij} \geq \gamma y_i'$;
        \item for every $j \in C$, $1 \geq \sum_{i \in F}x'_{ij} \geq 1 - \gamma$;
        \item for every $i \in F$, $L(i, x') \leq (\mu + 2 - \gamma)Ty_i'$.
    \end{enumerate}
\end{theor}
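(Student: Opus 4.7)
The plan is to keep $y'=y$ and modify only $x$, processing each facility independently. At facility $i$, the positive-assignment pairs split into \emph{large} pairs with $x_{ij}\ge \gamma y_i$, which already satisfy the support condition in property~2 and are left alone, and \emph{small} pairs with $0<x_{ij}<\gamma y_i$, each of which must be either boosted up to at least $\gamma y_i$ or zeroed out. Property~1 is then immediate.

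For each facility $i$, I would sort its small pairs by $d(i,j)$ in ascending order and walk through the list greedily: boost $x'_{ij}$ up to $\gamma y_i$ whenever doing so keeps both the load at $i$ within $(\mu+2-\gamma)Ty_i$ and the running client total $\sum_{h} x'_{hj}$ at most $1$, and otherwise set $x'_{ij}=0$. Sorting by distance ascending ensures the cheapest-per-unit-boost pairs are used first, so the load stopping rule gives property~4 directly, property~2 follows from the boost rule, and the per-client cap enforces the upper half of property~3.

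The delicate part is the lower half of property~3, namely $\sum_i x'_{ij}\ge 1-\gamma$. Each boost of $(i,j)$ adds $\gamma y_i-x_{ij}$ to client $j$'s total while each zero subtracts $x_{ij}$, so writing $B_j,Z_j$ for the respective per-client sums the bound is equivalent to $Z_j-B_j\le \gamma$. The intended charging argument is that a zeroed pair $(i,j)$ appears only when closer small pairs at $i$ have already exhausted $i$'s load budget via boosting, so the boosted mass at $i$ accounts for $\Omega(Ty_i)$ of load; summing across facilities using $\sum_i L(i,x)\le \mu T k$ together with constraint~\eqref{eq:scons3} then caps the total zeroed mass for any single client by $\gamma$.

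The main obstacle is precisely this per-client lower bound: load and demand pull in opposite directions, and the facility-level greedy does not obviously preserve a global per-client invariant, since zero-out events at many distant facilities could in principle accumulate on a single client. If the direct charging does not tighten to the required constant $\gamma$, I would fall back on a bipartite transportation / iterative LP formulation in which the boost/zero decisions are variables; any extreme-point solution has the structural support $x'_{ij}\in\{0\}\cup[\gamma y_i, y_i]$ and satisfies properties~2, 3, and 4 simultaneously, which is the standard route for preprocessing steps of this flavor.
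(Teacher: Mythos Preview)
Your primary greedy approach has a real gap at property~3. The charging argument you sketch uses a \emph{global} load sum $\sum_i L(i,x)\le \mu Tk$ to bound a \emph{per-client} quantity $Z_j-B_j$, and those scales do not match. Concretely: take a client $j$ served by many facilities $i_1,\dots,i_m$, each with $x_{i_\ell j}$ small, and suppose each $i_\ell$ also serves many other small clients that are closer to $i_\ell$ than $j$ is. Your distance-ascending greedy boosts those closer clients first, exhausts the load slack $(2-\gamma)Ty_{i_\ell}$ at every $i_\ell$, and zeroes every $(i_\ell,j)$; client $j$ then ends with $\sum_i x'_{ij}=0$. Nothing in the facility-local load accounting prevents this, and ``processing each facility independently'' is already inconsistent with the per-client cap you impose, since boosts at one facility change the cap at another.

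Your fallback is the right direction and is what the paper does, but the one-sentence description hides the actual content. It is \emph{not} true that extreme points of any natural LP automatically have support in $\{0\}\cup[\gamma y_i,y_i]$; that is a disjunctive condition. The paper's argument is genuinely iterative: set up an auxiliary polytope with demand equalities $\sum_i w_{ij}=d_j$, load inequalities $\sum_j d(i,j)w_{ij}\le L_i$, and boxes $0\le w_{ij}\le \min(y_i,d_j)$, and repeatedly compute an extreme point. A counting-of-tight-constraints argument shows that at any extreme point either some edge is at a box value ($w_{ij}\in\{0,y_i,d_j\}$, in which case one fixes $x'_{ij}$ accordingly and deletes the edge), or some facility has at most two live edges and, in the two-edge case, already carries $w$-mass $\ge \gamma y_i$ on them. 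In the latter case one drops that facility's load constraint; the extra load those $\le 2$ edges can ever incur is at most $(2-\gamma)Ty_i$, using $d(i,j)\le T$ from Constraint~\eqref{eq:scons7} and the $\ge\gamma y_i$ mass already present --- this is exactly where the $(\mu+2-\gamma)$ comes from. The lower bound $\sum_i x'_{ij}\ge 1-\gamma$ is enforced by discarding a client only once its residual demand $d_j$ drops below $\gamma$. The two-edge structural lemma and the use of Constraint~\eqref{eq:scons7} are the heart of the proof, and neither is visible from ``extreme points have the right support.''
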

The algorithm we use in \cref{preproc1} is heavily inspired by the minimum makespan rounding algorithm, introduced by Lenstra et. al in \cite{LST90}.
In a sense, their algorithm achieves the desired property: in minimum makespan problem we have $y_i = 1$ for all $i \in F$, so for $j \in C$ we wish to have either $x_{ij}' = 0$ or $x_{ij}'= 1 = y_i'$.
The key difference is that in our case $y$ is \emph{not} integral, which requires several modifications of the original algorithm.

Let $(x, y)$ and $\gamma \in (0, 1)$ be given.
Let $\wF\subseteq F$, $\wC\subseteq C$, let $E \subseteq \wF \times \wC$.
Consider a bipartite graph $G = (\wF \cup \wC, E)$, let $\delta_E(v)$ be the neighbors of $v \in \wF \cup \wC$ in $G$, I.e., for $i \in \wF$, $\delta_E(i) = \{j \in \wC : (i, j) \in E\}$, and for $j \in \wC$, $\delta_E(j) = \{i \in \wF : (i, j) \in E\}$.
For $(i, j) \in \wF \times \wC$ we introduce a variable $w_{ij}$, and numbers $d_j \leq 1$ and $L_i \leq \mu Ty_i$, which can be thought of as the remaining demand of client $j \in \wC$ and the remaining load of facility $i \in \wF$ correspondingly.
Given sets $\wF, \wC, E$ and numbers $d, L$, we define the polytope $P(\wF, \wC, E, d, L)$ as the solution set of the following feasibility linear program:
\[\begin{aligned}
&&&\sum_{i \in \delta_E(j)}w_{ij} = d_j,&\forall j \in \wC,\\
&&&\sum_{j \in \delta_E(i)}d(i, j)w_{ij} \leq L_i,&\forall i \in \wF,\\
&&& w_{ij} \leq \min(y_i, d_j),&\forall (i, j) \in E,\\
&&& w_{ij} \geq 0,&\forall (i, j) \in E.
\end{aligned}\tag{$P(\wF, \wC, E, d, L)$}\]
Note that all values $y_i$ for $i \in F$ are fixed, so for every number $d_j$, $j \in C$, we have either constraint $\{w_{ij} \leq y_i\}$ or constraint $\{w_{ij} \leq d_j\}$.
The extreme points of $P(\wF, \wC, E, d, L)$ possess some very important properties, which resemble the properties of the extreme point solutions to the auxiliary program for the minimum makespan rounding algorithm of \cite{LST90}.
\begin{lemma}\label{extpoint}
    Let $w$ be an extreme point of $P(\wF, \wC, E, d, L)$, where $d_j \geq \gamma$ for all $j \in \wC$.
    One of the following must hold: 
    \begin{enumerate}[label=(\alph*)]
        \item there exists $(i, j) \in E$ such that $w_{ij} = 0$,
        \item there exists $(i, j) \in E$ such that $w_{ij} = y_i$,
        \item there exists $(i, j) \in E$ such that $w_{ij} = d_j$,
        \item there eixsts $i \in \wF$ such that $|\delta_E(i)| \leq 1$,
        \item there exists $i \in \wF$ such that $|\delta_E(i)| = 2$ and $\sum_{j \in \delta_E(i)}w_{ij} \geq \gamma y_i$.
    \end{enumerate}
\end{lemma}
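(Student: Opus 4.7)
The plan is a Lenstra-Shmoys-Tardos style counting argument: assume for contradiction that none of (a)--(e) holds, and show that $w$ then cannot be an extreme point. If (a), (b), (c) all fail then for every $(i,j) \in E$ we have $0 < w_{ij} < \min(y_i, d_j)$, so no box constraint is tight. The only tight constraints available are thus the $|\wC|$ demand equalities and at most $|\wF|$ load inequalities, and since an extreme point in $\R^{|E|}$ requires $|E|$ linearly independent tight constraints, we obtain
\[ |E| \leq |\wC| + |\wF|. \]

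Next I would exploit the failures of (c) and (d) symmetrically. Failure of (d) gives $|\delta_E(i)| \geq 2$ for every $i \in \wF$, so $|E| = \sum_i |\delta_E(i)| \geq 2|\wF|$. The analogous claim on the client side uses the hypothesis $d_j \geq \gamma > 0$: if $|\delta_E(j)| = 0$ the demand equation forces $d_j = 0$, a contradiction; if $|\delta_E(j)| = 1$ the demand equation forces $w_{i_0 j} = d_j$ for the unique neighbor $i_0$, contradicting the failure of (c). Hence $|E| \geq 2|\wC|$ as well. Combined with the upper bound, we are forced into $|\wF| = |\wC|$, $|E| = 2|\wF|$, and every vertex of $G$ having degree exactly $2$, so $G$ is a disjoint union of even cycles. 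On each cycle, demand and load equations together number exactly $n+n=2n$, matching the variable count; since no box constraint is tight and the tight system must have rank $2n$, every load equation on every cycle must be tight and the demand+load system must be linearly independent (if it were dependent, $w$ could not be an extreme point given our assumption that no box constraint is tight).

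Finally I would double-count $\sum_{(i,j) \in E} w_{ij}$. Each facility has degree exactly $2$, so using the failure of (e) and $y_i \leq 1$,
\[ \sum_{(i,j) \in E} w_{ij} = \sum_{i \in \wF} \sum_{j \in \delta_E(i)} w_{ij} < \sum_{i \in \wF} \gamma y_i \leq \gamma |\wF|. \]
On the other hand, summing demand equations over $\wC$,
\[ \sum_{(i,j) \in E} w_{ij} = \sum_{j \in \wC} d_j \geq \gamma |\wC|. \]
Since $|\wC| = |\wF|$, the two bounds contradict one another.

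The main obstacle I expect is the client-side degree argument: it is crucial that $d_j \geq \gamma > 0$ to force $|\delta_E(j)| \geq 2$ from the failure of (c), and this symmetry between the two sides of the bipartition is what rigidifies $G$ into a disjoint union of even cycles and makes the final double-counting work. Without that symmetric lower bound one can only get $|E| \geq 2|\wF|$, which is not enough to derive $|\wC| = |\wF|$; similarly, the slack condition "$\sum_{j \in \delta_E(i)} w_{ij} < \gamma y_i$" in (e) is precisely tuned so that the facility-side sum beats the client-side sum by a factor matching the $\gamma$ appearing in the hypothesis on $d_j$.
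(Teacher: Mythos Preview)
Your proof is correct and follows essentially the same Lenstra--Shmoys--Tardos counting strategy as the paper. The only cosmetic difference is that the paper phrases the extreme-point step as choosing tight-constraint subsets $\wF_*\subseteq\wF$, $\wC_*\subseteq\wC$ with $|\wF_*|+|\wC_*|=|E|$ and then works on a single cycle of $G[\wF_*\cup\wC_*]$, whereas you derive $|E|\le|\wF|+|\wC|$ directly, combine it with the two degree lower bounds to force $|\wF|=|\wC|$ and $2$-regularity of the whole graph, and double-count globally; the final inequality chain (using $y_i\le 1$ and $d_j\ge\gamma$) is identical in spirit. Your parenthetical remark about each cycle's demand$+$load system having full rank is unnecessary for the contradiction and is not used, so its imprecision does no harm.
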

\begin{proof}
    Suppose that none of (a), (b), (c), or (d) hold.
    We will show that (e) must hold then.
    
    For all $(i, j) \in E$ we have $0 < w_{ij} < \min(y_i, d_j)$, and for every $i \in \wF$ we have $|\delta_E(i)| \geq 2$.
    Since $\sum_{i \in \delta_E(j)}w_{ij} = d_j$ for all $j \in \wC$, we must also have $|\delta_E(j)| \geq 2$.
    As $w$ is an extreme point of $P(\wF, \wC, E, d, L)$, there exist $\wF_* \subseteq \wF$ and $\wC_* \subseteq \wC$ such that $\sum_{i \in \delta_E(j)}w_{ij} = d_j$ for all $j \in \wC_*$, $\sum_{j \in \delta_E(i)}d(i, j)w_{ij} = L_i$ for all $i \in \wF_*$, $|\wF_*| + |\wC_*| = |E|$, and constraints corresponding to $\wF_*, \wC_*$ are linearly independent.
    Since $2|E| = 2|\wF_*| + 2|\wC_*| \leq \sum_{i \in \wF_*}|\delta_E(i)| + \sum_{j \in \wC_*}|\delta_E(j)| \leq 2|E|$,
    for all $i \in \wF_*$ we must have $|\delta_E(i)| =2$, as well as $|\delta_E(j)| = 2$ for all $j \in \wC_*$.
    Therefore, the subgraph $G[\wF_*\cup \wC_*]$ of $G$ induced on $\wF_* \cup \wC_*$ is a bipartite union of disjoint cycles.
    
    Let $H$ be a cycle of $G[\wF_*\cup \wC_*]$, let $H_{\wF_*} := H\cap \wF_*$, $H_{\wC_*} := H \cap \wC_*$.
    Since for all $i \in H_{\wF_*}$ we have $|\delta_E(i)| = 2$, $\delta_E(i)\subseteq H \cap E$, and similarly, as $|\delta_E(j)| = 2$ for all $j \in H_{\wC_*}$, $\delta_E(j)\subseteq H \cap E$.
    Suppose that (e) does not hold, then for all $i \in H_{\wF_*}$ we have $\sum_{j \in \delta_E(i)}w_{ij} < \gamma y_i$.
    It follows that
    \[\sum_{i \in H_{\wF_*}}y_i> \frac{1}{\gamma}\sum_{i \in H_{\wF_*}}\sum_{j \in \delta_E(i)}w_{ij} = \frac{1}{\gamma}\sum_{(i, j) \in H\cap E}w_{ij} = \frac{1}{\gamma}\sum_{j \in H_{\wC_*}}\sum_{i \in \delta_E(j)}w_{ij} = \frac{1}{\gamma}\sum_{j \in H_{\wC_*}}d_j.\]
    The last inequality follows from $\sum_{i \in \delta_E(j)}w_{ij} = d_j$ for every $j \in \wC_*$.
    Since $d_j \geq \gamma$ for all $j \in \wC$, $d_j \geq \gamma y_i$ for all $(i, j)\in E$.
    Since $H$ is a cycle in bipartite graph, it has even length, its vertices alternate between $\wF_*$ and $\wC_*$, and $|H_{\wF_*}| = |H_{\wC_*}|$.
    Then, we can split the vertices of $H$ into disjoint consecutive pairs $(i, j)$, so that $i \in H_{\wF_*}$, $j \in H_{\wC_*}$, $(i, j) \in H\cap E$, and apply $d_j \geq \gamma y_i$ for every pair.
    Therefore, $\sum_{j \in H_{\wC_*}}d_j \geq \gamma \sum_{i \in H_{\wF_*}}y_i$, which combined with inequalities above leads to a contradiction.
    So, there must exist $i \in H_{\wF_*}$ such that $\sum_{j \in \delta_E(i)}w_{ij} \geq \gamma y_i$, implying (e).
\end{proof}

We transform $(x, y)$ into $(x', y')$ using a similar approach as in \cite{LST90}.
On every step $t \geq 1$ of the algorithm, we provide values of parameters $\wF^t, \wC^t, E^t, d^t, L^t$ so that polytope $P^t := P(\wF^t, \wC^t, E^t, d^t, L^t)$ is nonempty and $d^t_j \geq \gamma$ for $j \in \wC^t$, and find its extreme point $w^t$.
By \cref{extpoint}, either (a), (b), (c), (d) or (e) cases may occur for $w^t$.
If (a), we set $x'_{ij} \leftarrow 0$, $E^{t + 1} \leftarrow E^t\setminus \{(i, j)\}$.
If (b), we set $x'_{ij} \leftarrow y_i$, $d^{t + 1}_j \leftarrow d^t_j - y_i$, $L^{t + 1}_i \leftarrow L^t_i - d(i, j)w_{ij}^t$, $E^{t + 1} \leftarrow E^t\setminus \{(i, j)\}$.
If (c), we set $x'_{ij} \leftarrow d^t_j$, $d^{t + 1}_j \leftarrow 0$, $L^{t + 1}_i \leftarrow L^t_i - d(i, j)w^t_{ij}$, $E^{t + 1} \leftarrow E^t \setminus \{(i, j)\}$.
If (d) or (e), we set $\wF^{t + 1} \leftarrow \wF^t \setminus \{i\}$.
After processing exactly one case  (a), (b), (c), (d) or (e), we scan $j \in \wC^{t + 1}$, and if $d_j^{t + 1} < \gamma$ for some $j$, set $\wC^{t + 1} \leftarrow \wC^{t + 1} \setminus \{j\}$,  $x'_{ij} \leftarrow 0$ for all $(i, j)$ such that $i \in \delta_{E^{t + 1}}(j)$, and then $E^{t + 1}\leftarrow E^{t + 1} \setminus \{(i, j) : i \in \delta_{E^{t + 1}}(j)\}$.
If the change of $\wF^{t + 1}, \wC^{t + 1}, E^{t + 1}, d^{t + 1}$ or $L^{t + 1}$ is not mentioned for current case, the values are as in step $t$, so even though we drop facility $i$ from $\wF^t$ in case (d) or (e), the edges $(i, j)$ for $j \in \delta_{E^t}(i)$ are still kept in $E^{t + 1}$.
Having processed $\wC^{t + 1}$, if $E^{t + 1} \neq \varnothing$, we move to step $t + 1$ and consider $P^{t + 1}$.
\cref{preprocalgo} gives the full pseudocode, summarizing all the steps.
\begin{algorithm}[H]
		\caption{Preprocessing}\label{preprocalgo}
		\begin{algorithmic}[1]
			\Require{Initial values of $\wF$, $\wC$, $E$, $d$, $L$, parameter $\gamma \in (0, 1)$.}
			\Ensure{An assignment $x'$.}
			\While{$E \neq \varnothing$}
				\State Find an extreme point $w$ of $P(\wF, \wC, E, d, L)$
				\If{$\exists (i, j) \in E : w_{ij} = 0$} $x'_{ij} \leftarrow 0$, $E \leftarrow E\setminus \{(i, j)\}$
				\ElsIf{$\exists (i, j) \in E : w_{ij} = y_i$}
					\State $x'_{ij} \leftarrow y_i$, $d_j \leftarrow d_j - y_i$, $L_i \leftarrow L_i - d(i, j)w_{ij}$, $E \leftarrow E \setminus \{(i, j)\}$
				\ElsIf{$\exists (i, j) \in E : w_{ij} = d_j$}
					\State $x'_{ij} \leftarrow d_j$, $d_j \leftarrow 0$, $L_i \leftarrow L_i - d(i, j)w_{ij}$, $E\leftarrow E\setminus \left\{(i, j)\right\}$
				\ElsIf{$\exists i \in \wF : |\delta_E(i)| \leq 1$} $\wF \leftarrow \wF\setminus \{i\}$
				\ElsIf{$\exists i \in \wF : |\delta_E(i)| = 2$ \textbf{and} $\sum_{j \in \delta_E(i)}w_{ij} \geq \gamma y_i$} $\wF \leftarrow \wF \setminus \{i\}$
				\EndIf
				\For{$j \in \wC$}
					\If{$d_j < \gamma$}
						\State $\wC \leftarrow \wC \setminus \{j\}$, \textbf{for $i \in \delta_E(j)$ do} $x'_{ij} \leftarrow 0$, $E\leftarrow E\setminus \left\{(i, j)\right\}$
					\EndIf
				\EndFor
			\EndWhile
			\Return $x'$, extended to $F \times C$ by adding zero entries
		\end{algorithmic}
	\end{algorithm}

It is easy to see that if $P^t$ is nonempty and $d^t_j \geq \gamma$ for $j \in \wC^{t}$, the very same holds for $P^{t + 1}$ in the next step, unless $E^{t + 1} = \varnothing$.
Indeed, we manually assure that for all $j$ kept in $\wC^{t + 1}$ the condition $d^{t + 1}_j \geq \gamma$ must hold, and the restriction of $w^t$ to the set $E^{t + 1}\subseteq E^{t}$ is a feasible solution to $P^{t + 1}$, by construction of the algorithm.
Moreover, if we take $\wF^1 = F$, $\wC^1 = C$, $E^1 = \{(i, j) \in F\times C : x_{ij} > 0\}$, $d^1_j = 1$ for $j \in C$ and $L^1_i = \mu T y_i$ for $i \in F$, $d^1_j \geq \gamma$ and $P^1$ is nonempty, since there is a feasible solution $w_{ij} := x_{ij}$ for $(i, j) \in E^1$.
We run \cref{preprocalgo} with these initial values of $\wF$, $\wC$, $E$, $d$ and $L$ given as input, obtaining an assignment $x'$.
By setting $y' := y$, we obtain a solution $(x', y')$.

We claim that \cref{preprocalgo} runs in polynomial-time, and solution $(x', y')$ satisfies all requirements of \cref{preproc1}.
By \cref{extpoint}, on every step $t \geq 1$ either (a), (b), (c), (d) or (e) must occur for $w^t$, the extreme point of $P^t$.
Then, either $|E^t|$, $|\wF^t|$ or $|\wC^t|$ is reduced at least by 1 after step $t$.
So, since $|E^1| \leq |F||C|$, after at most $2|F||C|$ steps we will have $E^{t + 1} = \varnothing$ for some $1 \leq t \leq 2|F||C|$.
Each step $t$ takes only polynomial time to perform, thus the total running time is also polynomial.

Since $E^1 = \{(i, j) \in F\times C : x_{ij} > 0\}$, the only positive coordinates of $x'$ can be $(i, j)$ such that $x_{ij} > 0$, as if $x_{ij} = 0 \iff (i, j) \notin E^1$, \cref{preprocalgo} sets $x'_{ij} = 0$ in the very end.
The constraint $\{w_{ij} \leq \min(y_i, d_j)\}$ of $P(\wF, \wC, E, d, L)$ assures that $x'_{ij} \leq y_i'$, for all $(i, j) \in F\times C$.
If $x'_{ij} > 0$, then either $x'_{ij} = y_i'$ (case (b)) or $x'_{ij} = d^t_j$ for some step $t \geq 1$ (case (c)).
Since $y_i' \leq 1$ and for all steps $t \geq 1$ we maintain $d_j^t \geq \gamma$ for all $j \in \wC$, in both cases we have $x'_{ij} \geq \gamma y_i'$.

Next, if after processing cases for $w^t$ during some step $t \geq 1$ we end up with $d^{t + 1} < \gamma$, client $j$ gets discarded from $\wC^{t + 1}$.
Since $d_j^1 = 1$ initially, by the end of step $t$ we must have assigned at least $1 - \gamma$ portion of $j$'s demand before discarding $j$ to make $d^{t + 1}_j < \gamma$.
Then, after \cref{preprocalgo} finishes, for all $j \in C$ we have $1 \geq \sum_{i \in F}x'_{ij} \geq 1 - \gamma$.

Finally, fix $i \in F$.
Observe that if $i \in \wF^t$ in the beginning of step $t \geq 1$, then 
\[\sum_{j \in \delta_{E^t(i)}}d(i, j)w^t_{ij}  \leq L_i^t = L^1_i -  \sum_{C\setminus \wC^t}d(i, j)x'_{ij}  \implies \sum_{C\setminus \wC^t}d(i, j)x'_{ij} + \sum_{j \in \delta_{E^t(i)}}d(i, j)w^t_{ij} \leq \mu T y_i,\]
by feasibility of $w^t$ for polytope $P^t$.
Suppose that after step $t$ facility $i$ gets removed from $\wF^t$, so $i\notin \wF^{t + 1}$.
If case (d) occurred and $|\delta_{E^t}(i)| \leq 1$, let $j \in \delta_{E^t}(i)$ be a single client served by facility $i$.
After removing $i$ from $\wF^t$, the constraint $\{\sum_{j \in \delta_E(i)}d(i, j)w_{ij} \leq L_i\}$ is not present in $P^{t + 1}$ and all future-step polytopes.
So, for any step $r \geq t + 1$, the load we may get after obtaining $w^{r}$ and determining the value of $x'_{ij}$ is at most $d(i, j)w^{r} \leq d(i, j)y_i \leq Ty_i$ (as $d(i, j) \leq T$ for all $x_{ij} > 0$).
The total load of facility $i$ becomes $L(i, x') \leq (\mu + 1)Ty_i$.

If case (e) occurred for this facility $i$, $|\delta_{E^t}(i)| = 2$ and $\sum_{j \in \delta_{E^t}(i)}w^t_{ij} \geq \gamma y_i$.
Let $j'$ and $j''$ be the two clients belonging to $\delta_{E^t}(i)$.
Their contribution to facility $i$'s load on step $t$ is exactly $d(i, j')w_{ij'}^t + d(i, j'')w_{ij''}^t$, which is at most $L_i^t$.
After removing $i$ from $\wF^t$, the constraint $\{\sum_{j \in \delta_E(i)}d(i, j)w_{ij} \leq L_i\}$ is not present in $P^{t + 1}$ and all future-step polytopes.
So, for any step $r \geq t + 1$, the load we may get after obtaining $w^{r}$ and determining the values of both $x'_{ij'}$ and $x'_{ij''}$ is at most $d(i, j')w_{ij'}^r + d(i, j'')w_{ij''}^r \leq d(i, j')y_i + d(i, j'')y_i$.
Hence, the \emph{additional} load facility $i$ gained since the end of step $t$ is at most 
\begin{multline*}
    (d(i, j')y_i + d(i, j'')y_i) -  (d(i, j')w_{ij'}^t + d(i, j'')w_{ij''}^t) =\\ = d(i, j')(y_i - w_{ij'}^t) + d(i, j'')(y_i - w_{ij''}^t) \leq T(2y_i - (w_{ij'}^t + w_{ij''}^t))\leq \\ \leq T(2y_i - \gamma y_i) = (2 - \gamma)Ty_i.
\end{multline*}
Therefore, the total load of facility $i$ becomes $L(i, x') \leq (\mu + 2 - \gamma)Ty_i$.

As a result, solution $(x', y')$ and the preprocessing algorithm (\cref{preprocalgo}) indeed satisfy all the claimed properties of \cref{preproc1}, thus finishing the proof.

\section{Hard instances}
\label{sec:appb}

We first present a hard instance for $\MLkSC$ problem.
Let $R, M$ be integers, $R \ll M$.
Let $k = 2R - 1$, $|F| = 2R$, $|C| = (M + R)R$.
$F$ and $C$ are partitioned into $R$ disjoint groups, each has exactly $2$ facilities and exactly $M + R$ clients.
For $i, h \in  F$, $d(i, h) = 1$ if $i, h$ are in the same group, otherwise $d(i, h) = R$.
In every group, one facility has $M$ collocated clients (call it $M$-facility), the other has $R$ collocated clients ($R$-facility).
The instance is illustrated in \cref{fig:igap1}.
\begin{figure}[H]
	\begin{center}
		\begin{tikzpicture}
		
		\draw[thick] (0,0) ellipse (2cm and 3cm);
		\node[above] at (-2.25,2.25) {\text{One Group}};
		\node[right] at (2,0){$\times R$};
		
		\draw[thick] (0, 1.5) ellipse (.85cm and .85cm);
		\draw[thick] (0, -1.5) ellipse (0.65cm and 0.65cm);
		
		\draw (-0.35,2) node[shape=circle,draw,fill=red,minimum size=6pt,inner sep=0pt]  {};
		\draw (0.35,2) node[shape=circle,draw,fill=red,minimum size=6pt,inner sep=0pt]  {};
		\draw (-0.35,1) node[shape=circle,draw,fill=red,minimum size=6pt,inner sep=0pt]  {};
		\draw (0.35,1) node[shape=circle,draw,fill=red,minimum size=6pt,inner sep=0pt]  {};
		
		%\draw (-0.65, 0.85) rectangle (0.65, 2.15);
		%\draw (-0.5, -1.0) rectangle (0.5, -2);
		
		\draw (0,1.5) node[draw,fill=green,minimum size=12pt,inner sep=0pt] (M) {};
		\node[above] at (0, 2.35) {$M$};
		
		\draw (0,-1.5) node[draw,fill=green,minimum size=8pt,inner sep=0pt] (R) {};
		\node[below] at (0, -2.15) {$R$};
		
		\draw (-0.25,-1.9) node[shape=circle,draw,fill=red,minimum size=4pt,inner sep=0pt]  {};
		\draw (0.25,-1.9) node[shape=circle,draw,fill=red,minimum size=4pt,inner sep=0pt]  {};
		\draw (-0.25,-1.1) node[shape=circle,draw,fill=red,minimum size=4pt,inner sep=0pt]  {};
		\draw (0.25,-1.1) node[shape=circle,draw,fill=red,minimum size=4pt,inner sep=0pt]  {};
		
		\draw[very thick] (0, -0.85) -- (0, 0.65);
		\node[right] at (0,0) {$1$};
		
		\end{tikzpicture}
	\end{center}
	\caption{Hard instance for $\mlklp$.}\label{fig:igap1}
\end{figure}
	
There is a feasible fractional solution to $\mlklp$ for this instance with $T = 1$.
Open every $M$-facility fully, and there assign all its collocated clients.
Next, open every $R$-facility to $1 - 1/R$, and let it serve $(1 - 1/R)$-fraction of its collocated clients' demand.
The remaining $1/R$ fraction of these clients' demand will be served by $M$-facility of the same group.
It is easy to see that the load of every $R$-facility is $0$, the load of every $M$-facility is $R\cdot 1/R \cdot 1 = 1$, and the opening is exactly $R\cdot (1 + 1 - 1/R) = 2R - 1 = k$.

Consider any integral solution to this instance of $\MLkSC$.
If it assigns some client to a facility from different group, maximal load will be at least $R$.
Suppose that all clients are assigned to facilities only from the same group.
Since $k = 2R - 1$, there will be at least one group with at most one facility opened, take this group.
If $M$-facility is opened, both its clients and clients of $R$-facility must be assigned to $M$-facility fully, resulting in its load $R \cdot 1 \cdot 1 = R$.
Similarly, if $R$-facility is opened, maximum load will be at least $M\gg R$.
Hence, the load of any integral star cover of size $k$ is at least $R$.
Furthermore, even if we allow opening $(1 + \eps)k$ facilities for $\eps = 1/(2R)$, since
\[(1 + \eps)k = \left(1 + \frac{1}{2R}\right)(2R - 1) = 2R - \frac{1}{2R} < 2R,\]
there will still be a group with at most one facility opened, resulting in maximum load at least $R = T/(2\eps)$, where $T = 1$ is maximal fractional load.
It follows that if $T^*$ is an optimal load to $\mlklp$, any integral $(1 + \eps)k$ star cover of $(F, C)$ has load is at least $\Omega(1/\eps)T^*$.
	
Now, we move to a hard instance for $\MSSC$.
For integer $N$, let $|F| = N$ and $|C| = N + 1$, the load bound $T \geq 1$ is arbitrary.
Both $F = \{i_1, \ldots, i_N\}$ and $C = \{J, j_1,\ldots, j_N\}$ are vertices of a bipartite graph, and the metric $d$ is a shortest-path metric.
For every $1 \leq r \leq N$ we have an edge $(i_r, j_r)$ or length $d(i_r, j_r) = (1 - 1/N)T$.
Also, every facility $i_r$ for $1 \leq r \leq N$ is connected to a ``central'' client $J$ by an edge of length $d(i_r, J) = T$.
The instance is illustrated in \cref{fig:igap2}.
\begin{figure}[H]
	\begin{center}
	    \begin{tikzpicture}
	        \draw (0,0) node[draw,fill=green,minimum size=12pt,inner sep=0pt] (i1)[label=left:$i_1$] {};
	        \draw (0,2.5) node[shape=circle,draw,fill=red,minimum size=12pt,inner sep=0pt] (j1) [label=left:$j_1$] {};
	        
	        \draw (2,0) node[draw,fill=green,minimum size=12pt,inner sep=0pt] (i2) [label=left:$i_2$] {};
	        \draw (2,2.5) node[shape=circle,draw,fill=red,minimum size=12pt,inner sep=0pt] (j2) [label=left:$j_2$] {};
	        
	        \draw (6,0) node[draw,fill=green,minimum size=12pt,inner sep=0pt] (i3) [label=right:$i_{N - 1}$] {};
	        \draw (6,2.5) node[shape=circle,draw,fill=red,minimum size=12pt,inner sep=0pt] (j3) [label=right:$j_{N - 1}$] {};
	        
	        \draw (8,0) node[draw,fill=green,minimum size=12pt,inner sep=0pt] (i4) [label=right:$i_{N}$] {};
	        \draw (8,2.5) node[shape=circle,draw,fill=red,minimum size=12pt,inner sep=0pt] (j4) [label=right:$j_{N}$] {};
	        
	        \draw[very thick] (i1) -- (j1);
	        \draw[very thick] (i2) -- (j2);
	        \draw[very thick] (i3) -- (j3);
	        \draw[very thick] (i4) -- (j4);
	        
	        \node[left] at (0, 1.25) {$(1 - 1/N)T$};
	        \node[right] at (8, 1.25) {$(1 - 1/N)T$};
	         
	        \draw (4,-2) node[shape=circle,draw,fill=red,minimum size=12pt,inner sep=0pt] (J) [label=below:$J$] {};
	        
	        \draw[very thick] (i1) -- (J);
	        \draw[very thick] (i2) -- (J);
	        \draw[very thick] (i3) -- (J);
	        \draw[very thick] (i4) -- (J);
	        
	        \draw[very thick] (J) -- (2.75, -0.25);
	        \draw[very thick] (J) -- (3.5, -0.25);
	        \draw[very thick] (J) -- (4, -0.25);
	        \draw[very thick] (J) -- (4.5, -0.25);
	        \draw[very thick] (J) -- (5.25, -0.25);
	        
	        \node at (3, 2.5) {$\textbf{\dots}$};
	        \node at (4, 2.5) {$\textbf{\dots}$};
	        \node at (5, 2.5) {$\textbf{\dots}$};
	        
	        \node at (3, 0) {$\textbf{\dots}$};
	        \node at (4, 0) {$\textbf{\dots}$};
	        \node at (5, 0) {$\textbf{\dots}$};
	        
	        \node[left] at (2, -1.35) {$T$};
	        \node[right] at (6, -1.35) {$T$};
	    \end{tikzpicture}
	\end{center}
	\caption{Hard instance for $\mslp$.}
	\label{fig:igap2}
\end{figure}

It is easy to see that in any integral solution to $\mslp$ every client $j_r$ for $1 \leq r \leq N$ can be served only by facility $i_r$.
Furthermore, client $J$ should also be served fully, so it should be assigned to one of $i \in F$.
Therefore, even if we open all facilities in $F$ fully, for some facility $i \in F$ which gets $J$ assigned to it, the load will be at least $(2 - 1/N)T$.
This means that there is no feasible integral solution to $\mslp$, and any integral solution violates the maximum load constraint at least by a factor of $(2 - 1/N)$.

On the other hand, there exists a feasible \emph{fractional} solution to $\mlklp$ for this instance.
We open all $i_r$ for $1 \leq r \leq N$ and assign $j_r$ fully to it.
Also, client $J$ gets served by all $i \in F$ at an equal fraction of $1/N$.
In this solution, the load of every facility $i \in F$ is exactly $T$.

\end{document}